\numberwithin{equation}{section}
\newtheorem{thm}{Theorem}[section]
\newtheorem{lem}{Lemma}[section]
\newtheorem{rem}{Remark}
\begin{document}
% \pagestyle{empty}
% \pagenumbering{gobble}
\cleardoublepage
\pagenumbering{gobble}

\date{}

\title{Specification testing for regressions: an approach bridging between local smoothing and  global smoothing methods}
\author{Lingzhu Li$^{1}$ and Lixing Zhu$^{1, 2}$\footnote{The corresponding author.
Email: lzhu@hkbu.edu.hk. The research described herewith was supported by a grant from the University Grants Council of Hong Kong and a NSFC grant (NSFC11671042).}\\
  \small $^1$Department of Mathematics, Hong Kong Baptist University, Hong Kong \\
  \small $^{1, 2}$School of Statistics, Beijing Normal University, Beijing, China \\
}

\maketitle

%\begin{abstract}
\renewcommand\baselinestretch{1.4}
{\small

\noindent {\bf Abstract:}
For regression models, most of existing specification tests can be categorized into  the class of  local smoothing tests and of global smoothing tests.
Compared with global smoothing tests, local smoothing tests can only detect local alternatives distinct from the null hypothesis at a much   slower rate when the dimension of predictor vector is high, but can be more sensitive to oscillating alternatives.
In this paper, we suggest a projection-based test to  bridge between the local and global smoothing-based methodologies such that the test can benefit from the advantages of these two types of tests.
The test construction  is based on a kernel estimation-based
 method and the resulting test becomes a distance-based  test with a
 closed form. The asymptotic properties are investigated.
Simulations and a real data analysis are conducted to evaluate the performance of the test in finite sample cases.

\bigskip

\noindent
{\bf\it  Keywords:  Dimension reduction, Global smoothing test, Local smoothing test, Projection-based tests.
}

}

\newpage
%\tableofcontents

\cleardoublepage
\pagenumbering{arabic}

\newpage
\setcounter{equation}{0}
\section{Introduction}\label{sec: introduction}

% ---- statistical model, parametric model, applying field, importance of specification test ------
Statistical inference and prediction are of great interest and importance to decision-making in numerous fields such as medicine area, bioinformatics and econometrics.
The foundation of statistical analysis is statistical models.
{Among various statistical models, parametric ones are widely used} because  statistical analysis can then be more efficient if the model structure is proper.
{Since the parameter is unknown}, %-> Usually the model parameter is unknown,
we need to first get an estimation to conduct further analysis.
However, a wrongly specified model would result in unreliable estimations and following statistical inferences.
Therefore, it is important to test the model structure before a model is applied in any further regression analysis.
 Suppose the null model is of the following form:
\begin{eqnarray}\label{model: H_0}
  Y = g(X,\theta)+\varepsilon
\end{eqnarray}
where $g$ is a known function, $\theta \in\Theta\subset R^d$ is the unknown parameter and $E(\varepsilon|X) = 0$.
$X$ is a random vector in $R^p$ as the predictor and $Y$ is the response variable in $R$.
To check the adequacy of the model \eqref{model: H_0}, consider a general alternative model
\begin{eqnarray}\label{model: H_1}
  Y = G(X)+\varepsilon
\end{eqnarray}
where $G(\cdot) \notin \{g(\cdot,\theta): \theta\in \Theta\} $ is an unknown function.

{There are many proposals available in the literature.}
But how to efficiently deal with high-dimensional data is always a concern. A frequently used methodology is to transform the problem to a problem
at all projection directions. To be precise, test statistic can be based on, say, univariate projected predictors $\alpha^{\tau}X$ for all $\alpha$ in a subset of $R^p$. This is an idea of projection pursuit regression that was proposed by
% Friedman and Stuetzle (1981).
\cite{friedman1981projection}.
% Huber (1985)
\cite{huber1985projection} is a comprehensive reference of this methodology. Along this line,
% Bierens (1990)
\cite{bierens1990consistent} furthered the method by using the Fourier transformation that gives the weight functions to be $\exp(it^{\top} X)$ for all $t \in R^p$.
The integration over $t$ with respect to a measure can then formulate a final test statistic. This test is of a dimension reduction nature as every weight function uses univariate projected predictor $t^{\top}X$.
% An and Zhu (1992),
\cite{hongzhi1992testing},
% Zhu and Li (1998)
\cite{lixing1998dimension},
% Escanciano (2006)
\cite{escanciano2006consistent}, % Stute, Xu and Zhu (2008)***
\cite{stute2008model} and
% Lavergne and Patilea (2008)*** 
\cite{lavergne2008breaking} and % Lavergne and Patilea (2012)
\cite{lavergne2012one} are the relevant references in this field.
To be precise,
% Escanciano (2006)
\cite{escanciano2006consistent} proposed an omnibus test by using a residual marked empirical process whose index set contains  all projection directions.
The test in 
% Lavergne and Patilea (2008)*** 
\cite{lavergne2008breaking}
is also based on the empirical process, but the integral over all directions leads to a simple closed form of the test statistic.
% Stute and Zhu (2002)
\cite{stute2002model} is also based on a residual marked empirical process, but its index set contains only one projection direction.
% Stute, Xu and Zhu (2008)***
\cite{stute2008model} used a predictor-marked residual process to construct a test.
% Lavergne and Patilea (2012)
\cite{lavergne2012one} developed a smooth integral conditional moment test constructed by
% Zheng's (1996) test
\cite{zheng1996consistent} that uses nonparametric kernel estimation of some conditional moment.
% \textcolor{red}{Guo and Zhu (2017)***}
\cite{guozhu2017review} is a comprehensive review.

The aforementioned tests can be categorized into two very different classes:  nonparametric estimation-based and empirical process-based. Then they can be classified as local smoothing and global smoothing methods. This is because nonparametric estimation-based methods rely on local smoothing techniques and empirical process-based tests are the averages of functions of weighted sum of residuals over an index set, which is a global smoothing step.
%Zhu and Li (2003)
\cite{lixing1998dimension} and
% Lavergne and Patilea (2012)
\cite{lavergne2012one}, are based on nonparametric  estimation for the conditional moment and thus belongs to the class of local smoothing methods. \cite{guo2016model} introduced an adaptive-to-model test that is based on \cite{zheng1996consistent}.
The others are constructed in a global smoothing manner, such as
% Stute and Zhu (2002)
\cite{stute2002model},
% Escanciano (2006)
\cite{escanciano2006consistent} which are based on  empirical processes.
Other examples in the class of global smoothing tests include
% Stute et al. (1998)
%\cite{stute1998bootstrap},
% Zhu (2003)
\cite{zhu2003model},
% Khmaladze and Koul (2004)
%\cite{khmaladze2004martingale} and
% Tan et al. (2016)
\cite{tan2016projection}.
The tests are called global smoothing tests because nonparametric estimation is avoided  and  global averages over a group of statistics indexed by a set of indices is formulated as  final test statistics.

% drawbacks of local smoothing methods
These two classes of tests have their own pros and cons, which have been discussed frequently in the literature. If we do not use projected predictors, but the original $p$-dimensional predictor $X$, the inefficiency of  nonparametric estimation in high-dimension cases cause local smoothing tests to hardly maintain the significance level and dramatically lose its power  as the dimension $p$ increases. They can only detect the local alternatives distinct from the null at the rate of order $n^{-1/2}h^{-p/4}$, where $h$ is the bandwidth going to zero as $n\to \infty$.
%, see Stone (1980) \cite{stone1980optimal} and Chapter 7 in Fan and Gijbels (1996) \cite{fan1996local}.
Some methods require some dimension reduction model structures under either the null or the alternatives.
For instance, %Aerts et al. (2000) \cite{aerts1999testing} tested against additive alternative instead of general classes of alternatives.
% Guo et al. (2016)
\cite{guo2016model} designed an adaptive-to-model test for the single-index model: $Y = g(\beta^{\top} X)+\varepsilon$ where $g$ is a known function and $\beta$ is the unknown parameter. The test can detect the local alternatives converging to the null at the rate of $n^{-1/2}h^{-1/4}$.
For null models that have $q$ projection directions with a $p\times q$ matrix $\beta$, this rate slows down to $n^{-1/2}h^{-q/2}$.
To alleviate the negative impact from the dimensionality, the projection-based tests work well. The test in
\cite{lavergne2012one} is a local smoothing test, but can detect the local alternatives distinct from the null at the rate of $ n^{-1/2}h^{-1/4}.$ It is worth noticing that it is still a local smoothing test. As $h \to \infty$, this rate must slower than $n^{-1/2}$.
In contrast, global smoothing tests can always detect local alternatives distinct from the null at the fastest possible rate that is   $n^{-1/2}$.
For global smoothing tests, the local alternatives distinct from the null at the rate of  $1/\sqrt n$  can be detected.
% Delgado and Escanciano (2017)***
\cite{Delgado2017asymp}
proved that some global smoothing tests such as
% Stute (1997)***
\cite{stute1997} and % Stute, Theis and Zhu (1998)***
\cite{stute1998} have asymptotic optimality including asymptotically uniformly most powerful in a semiparametric context and asymptotically semiparametric efficient respectively. But
 many of them do not have  tractable limiting null distributions.
This requires using  re-sampling methods to determine critical values, such as either the bootstrap or the wild bootstrap or the Monte Carlo approximation, to approximate the corresponding  sampling null distributions.

In this paper, we propose a projection-based specification test. Like any projection-based test such as
% Escanciano (2006)
\cite{escanciano2006consistent} and
% Lavergne and Patilea (2012)
\cite{lavergne2012one}, %and even earlier work by
% An and Zhu (1992)
%\cite{hongzhi1992testing} and
% Zhu and Li (1998)
%\cite{lixing1998dimension},
we project the predictor onto one-dimensional subspaces such that at any direction, the test only involves univariate predictor. However, the key feature of the proposed test distinguishing from  these existing projection-based tests is that the proposed test  bridges between local and global smoothing methodology. The resulting test can have a simple closed form and  the advantages of global smoothing test as we discussed above although it is based on a local smoothing test. Thus, it could benefit from both.

The rest of this paper is organized as follows.
In Section \ref{sec: method}, the test statistic  construction is described.
Section \ref{sec: asymptotic properties} presents the asymptotic properties  under the null  and  alternative hypothesis.
In section \ref{sec: numerical analysis}, numerical studies are reported, including  simulations and a real data analysis. The results indicate that  the proposed test does benefit from both local and global smoothing testing procedures.
Section \ref{sec: discussion} contains some discussions.
Technical proofs are postponed to  the Appendix.

\setcounter{equation}{0}
\section{Test statistic construction}\label{sec: method}
\subsection{Basic idea}
From the models we stated in the previous section, the hypotheses are as follows:
\begin{eqnarray}\label{test hypothesis}
  H_0 &:& \mathbb{P}(E(Y|X) = g(X,\theta_0))=1 \text{ for some } \theta_0 \in \Theta \subset R^d, \nonumber \\
  H_1 &:& \mathbb{P}(E(Y|X) = g(X,\theta))<1 \text{ for all } \theta \in \Theta
\end{eqnarray}
where $ g(\cdot) $ is a known regression function and $\theta_0$ is the unknown parameter vector.
Define $e = Y-g(X,\theta_0) $ as the residual at the population level.
Under the null hypothesis $e = \varepsilon$  with the condition $E(e|X)=0$.
Let $f(\cdot)$ and $f_\alpha(\cdot)$ be respectively the density function of $X$ and $\alpha^{\top} X$.
Notice that $E(e|X)=0 $ holds if and only if $E(E^2(e|X)f(X))=0$ under  some continuous conditions on $f(\cdot)$, see
% Zheng (1996)
\cite{zheng1996consistent}.
Further, notice that $E(E^2(e|X)f(X))=0$ is equivalent to
\begin{eqnarray}
    E(E^2(e|\alpha^{\top} X)f_\alpha(\alpha^{\top} X))=0 \text{  for all }
    \alpha\in R^p  \nonumber
\end{eqnarray}
where $f_\alpha(\cdot)$ and $\mu(\cdot) $ are respectively the conditional density function of $\alpha^{\top} X$ when $\alpha$ is given and the marginal density function of  $\alpha$.
The following is a slightly extension of { Lemma 1}
% Escanciano (2006)
of \cite{escanciano2006consistent} in which
% Escanciano (2006)
%\cite{escanciano2006consistent},
the projection direction $\alpha$ is limited to the unit hypersphere $S^p = \{ \alpha: \|\alpha\|=1 \}$.
It can be checked that
\begin{eqnarray*}
  E(e|\alpha^{\top} X) = 0 \iff E(e|c\cdot \alpha^{\top} X)=0 \text{ for all } c\in R.
\end{eqnarray*}
Thus we obtain the following lemma.
\begin{lem}\label{equivalent condition}
Suppose $\eta$ is a random variable such that $E|\eta|<\infty$ and $\xi\in R^p$ is a random vector.
Then
    $E(\eta|\xi)=0$ holds if and only if $E(\eta|\beta^{\top} \xi) = 0 $ holds for all $\beta\in R^p$. Further, assume that $f_\alpha(\cdot)$ and  $\mu(\cdot)$ are positive on their supports, $E(e|X) = 0 $ almost surely holds if and only if
  $\int_{R^p} E(E^2(e|\alpha^{\top} X)f_\alpha(\alpha^{\top} X)) \mu(\alpha)d \alpha = 0. $
\end{lem}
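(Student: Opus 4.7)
The plan is to prove the two assertions of the lemma in sequence, with the first doing most of the work and the second following by combining the first with Zheng (1996)'s characterization and the positivity hypotheses. For the first assertion, the forward direction is immediate from iterated expectations: since $\sigma(\beta^{\top}\xi)\subseteq\sigma(\xi)$, we have $E(\eta\mid\beta^{\top}\xi)=E(E(\eta\mid\xi)\mid\beta^{\top}\xi)=0$ whenever $E(\eta\mid\xi)=0$. The reverse direction is the substantive part, and I would handle it via a Fourier uniqueness argument.

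Specifically, assuming $E(\eta\mid\beta^{\top}\xi)=0$ a.s.\ for every $\beta\in R^p$, applying the bounded test function $u\mapsto e^{itu}$ yields $E(\eta\, e^{it\beta^{\top}\xi})=0$ for every $t\in R$ and every $\beta\in R^p$. Writing $s=t\beta$ and letting $(t,\beta)$ range over $R\times R^p$ sweeps out all of $R^p$, so the characteristic function of the signed finite measure $A\mapsto E(\eta\mathbf{1}_A(\xi))$ is identically zero. Uniqueness of the Fourier transform on $R^p$ then forces this signed measure to vanish, which is exactly $E(\eta\mid\xi)=0$ a.s.

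For the second assertion, I would chain these observations. The forward direction again uses the tower property: $E(e\mid X)=0$ a.s.\ gives $E(e\mid\alpha^{\top}X)=0$ for every $\alpha$, hence $E\bigl(E^2(e\mid\alpha^{\top}X)f_\alpha(\alpha^{\top}X)\bigr)=0$ for every $\alpha$, and integrating against $\mu$ yields $0$. Conversely, if the displayed integral vanishes, the integrand is a nonnegative function of $\alpha$ weighted by the strictly positive density $\mu$, so it must equal $0$ for Lebesgue-a.e.\ $\alpha$; positivity of $f_\alpha$ on its support then gives $E(e\mid\alpha^{\top}X)=0$ a.s.\ for a.e.\ $\alpha$. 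Reapplying the Fourier step from the first assertion produces $E\bigl(e\, e^{i s^{\top}X}\bigr)=0$ for Lebesgue-a.e.\ $s\in R^p$, and the characteristic function of the signed measure $A\mapsto E(e\mathbf{1}_A(X))$ is then identically zero by continuity, delivering $E(e\mid X)=0$ a.s.

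The main obstacle I anticipate is bridging the gap from ``for $\mu$-a.e.\ $\alpha$'' back to ``for all directions simultaneously'' in the converse of the second part; a naive reading of the equivalence in the first assertion requires every $\beta$. The continuity of characteristic functions is what rescues this step: vanishing on a set of full Lebesgue measure propagates to vanishing everywhere, which is precisely the form needed to invoke Fourier uniqueness on $R^p$. Once that point is handled cleanly, the rest of the argument is bookkeeping with the tower property and the positivity of $f_\alpha$ and $\mu$.
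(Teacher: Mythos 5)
Your argument is correct, and it is essentially a self-contained version of what the paper does by citation: the paper does not prove Lemma 2.1 directly but derives it from Lemma 1 of Escanciano (2006) (stated for $\alpha$ on the unit sphere) together with the observation that $E(e\mid\alpha^{\top}X)=0\iff E(e\mid c\,\alpha^{\top}X)=0$ for all $c$, while Escanciano's lemma itself rests on exactly the Bierens-type characteristic-function argument you spell out. What your write-up buys is that the one genuinely delicate point becomes visible and is handled explicitly: the converse of the integral criterion only yields $E(e\mid\alpha^{\top}X)=0$ for $\mu$-a.e.\ $\alpha$, and you correctly pass from ``a.e.'' to ``everywhere'' via continuity of the characteristic function of the finite signed measure $A\mapsto E(e\mathbf{1}_A(X))$ (finiteness coming from $E|e|<\infty$), after which Fourier uniqueness closes the argument. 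Two small points worth making explicit if you polish this: (i) the step $E\bigl(E^2(e\mid\alpha^{\top}X)f_\alpha(\alpha^{\top}X)\bigr)=\int E^2(e\mid\alpha^{\top}X=u)\,f_\alpha^2(u)\,du=0$ uses that $\alpha^{\top}X$ has a density and that $f_\alpha>0$ on the support of $\alpha^{\top}X$, exactly as hypothesized; and (ii) the converse genuinely requires the support of $\mu$ to be rich enough that the rays it spans are dense in $R^p$ (trivially satisfied by the Gaussian $\mu$ the paper later adopts). Neither is a gap in your reasoning, only in its bookkeeping.
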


Under the alternative hypothesis, { Lemma 2.1} implies that there exists at least an $\alpha^\ast\in R^p$ such that $E(e|\alpha^\ast)\neq 0$, and then $\int_{R^p} E(E^2(e|\alpha^{\top} X)f_\alpha(\alpha^{\top} X))\mu(\alpha) d \alpha > 0$.
Therefore, We can use an estimator of this quantity to construct a test statistic.

Suppose we have an i.i.d.  sample $\{(x_i,y_i)\}_{i=1}^n $ from $(X,Y)$. The least squares estimate of $\theta_0$ is defined as  $\hat\theta_n = \arg\min\limits_{\theta\in \Theta} \frac{1}{n}\sum_{i=1}^n (y_i-g(x_i,\theta))^2 $.
Let $\hat e_i = y_i - g(x_i,\hat\theta_n) $ be the residual at the sample level.
Under some regularity conditions, $\hat\theta_n$ is a consistent estimate of $\theta_0$ under the null hypothesis and of an $\theta \in \Theta$ under the alternative hypothesis. Throughout the rest of this paper, we will not list the detailed conditions. The readers can refer to
% White (1981)
\cite{white1981consequences} (Corollary 2.2) and
% Bierens (1982)
\cite{bierens1982consistent} (Theorem 9).

%\subsection{Preliminary}
\subsection{Test construction}
To start with the construction, we review two existing tests first.
% Zheng (1996)'s test
\cite{zheng1996consistent}'s test is an empirical version of $E(eE(e|X)f(X))$ as follows:
\begin{eqnarray*}
  \frac{1}{n(n-1)}\sum_{i=1}^{n}\sum_{i\neq j}
  \hat e_i \hat e_j \frac{1}{h^p} K_p(\frac{x_i-x_j}{h})
\end{eqnarray*}
where $K_p(\cdot)$ is a product  kernel function and $h$ is the bandwidth.
With some regularity conditions, the test statistic multiplying $nh^{p/2}$ goes to its weak limit under the  null where $h\to 0$
as $n\to \infty$.
% Lavergne and Patilea (2012)'s test
\cite{lavergne2012one}'s test is an integrated \cite{zheng1996consistent}'s test over all projection directions $\alpha \in S^p$. It has the formula as
\begin{eqnarray*}
  \frac{1}{n(n-1)}\sum_{i=1}^n\sum_{j\neq i} \hat e_i\hat e_j \int_{B} \frac{1}{h}K\left(\frac{\alpha^{\top}(x_i-x_j)}{h}\right) d \alpha
\end{eqnarray*}
where $B$ is $S^p$ or can be some subset of $S^p$.
This test greatly alleviates the curse of dimensionality as it multiplied by $nh^{1/2}$ tends to its weak limit under the null. In their construction, the projection direction $\alpha$ is assumed to be uniformly distributed. It is noted that it is still a local smoothing test as the integral $\int_{B} \frac{1}{h}K\left(\frac{\alpha^{\top}(x_i-x_j)}{h}\right) d \alpha$ still involves the bandwidth $h$ and the convergence rate $nh^{1/2}$ is still slower than the rate $n$ when a quadratic form of global smoothing test is used such as % Stute and Zhu (2002)
\cite{stute2002model}.
Also, the integral does not have a closed form and then the computation is an issue when the dimension $p$ is high. \cite{lavergne2012one} used a Monte Carlo approximation for this integral. The computation is time-consuming in high-dimensional scenarios.

We now modify their construction to derive our test statistic.
First, use the kernel estimate to replace the conditional moment $E(e|\alpha^{\top} X)$ and the density function $f_\alpha(\alpha^{\top} x)$ of $\alpha^{\top} X$ as
\begin{eqnarray*}
  \hat f_\alpha(\alpha^{\top} x_i) &=& \frac{1}{n-1}\sum_{j\neq i}\frac{1}{h}K\left(\frac{\alpha^{\top} (x_i-x_j)}{h}\right), \\
  \hat E(e|\alpha^{\top} x_i) &=& \frac{\frac{1}{n-1}\sum_{j\neq i}^n \hat e_j \frac{1}{h}K\left(\frac{\alpha^{\top} (x_i-x_j)}{h}\right)}
  {\hat f_\alpha(\alpha^{\top} x_i)}
\end{eqnarray*}
where $K(\cdot)$ is the kernel function and $h$ is the smooth parameter.
 Note that \[
E(E^2(e|\alpha^{\top} X)f_\alpha(\alpha^{\top} X)) = E(eE(e|\alpha^{\top} X)f_\alpha(\alpha^{\top} X)).
\]
Thus $V \overset{\Delta}{=} \int_{R^p}E(E^2(e|\alpha^{\top} X)f_\alpha(\alpha^{\top} X))\mu(\alpha)d \alpha$ can be estimated by
\begin{eqnarray}\label{hat V}
  \hat V &\overset{\Delta}{=}& \int_{R^p} \frac{1}{n(n-1)}\sum_{i=1}^n\sum_{j\neq i} \hat e_i\hat e_j \frac{1}{h}K\left(\frac{\alpha^{\top}(x_i-x_j)}{h}\right) \mu(\alpha) d \alpha \nonumber \\
  &=& \frac{1}{n(n-1)}\sum_{i=1}^n\sum_{j\neq i} \hat e_i\hat e_j \int_{R^p} \frac{1}{h}K\left(\frac{\alpha^{\top}(x_i-x_j)}{h}\right) \mu(\alpha) d \alpha .
\end{eqnarray}
We can use this quantity to be a test statistic. Note that it involves the integral and seems still a local smoothing test.
We now choose some particular kernel function $K(\cdot)$ and measure $\mu (\cdot)$ to derive a statistic  that has a closed form. We consider Gaussian kernel and assume that the measure $\mu$ is also Gaussian. To be precise,

Let
\begin{eqnarray}
  K(u) = (2\pi)^{-1/2}\exp(-u^2/2),
\end{eqnarray}
 and consider $\alpha\sim N(0,\sigma^2 I_p)$ where $\sigma^2$ is a variance function of $\alpha$ to be determined, $I_p$ is an identity matrix of dimension $p$. The density function  $\mu$ is
\begin{eqnarray}
  \mu(\alpha) &=& (2\pi)^{-p/2} |\sigma^2 I_p|^{-1/2} \exp\left( -\frac{\alpha^{\top} (\sigma^2 I_p)^{-1}\alpha}{2} \right) \nonumber \\
  &=& (2\pi)^{-p/2} \sigma_\alpha^{-p} \exp \left( -\frac{\alpha^{\top} \alpha}{2\sigma^2} \right). \nonumber
\end{eqnarray}
Thus,  we have the following lemma.
\begin{lem}
When the above Gaussian kernel and measure $\mu$ are used, we have
\begin{eqnarray}
  \hat V &=& \frac{1}{n(n-1)}\sum_{i=1}^n\sum_{j\neq i} \hat e_i\hat e_j
  h^{-1} (2\pi)^{-1/2} \sigma^{-p}
  (\frac{d_{ij}}{h^2}+\frac{1}{\sigma^2} )^{-1/2} \sigma^{p-1} \nonumber\\
  &=& \frac{1}{\sqrt{2\pi}} \cdot \frac{1}{n(n-1)}\sum_{i=1}^n\sum_{j\neq i} \hat e_i\hat e_j
  \frac{1}{\sqrt{\sigma^2 d_{ij}+h^2}}.
\end{eqnarray}
When $\sigma^2$ is chosen to be $h^2$, we have
\begin{eqnarray}\label{1/h times V_n}
   \hat V = \frac{1}{h\sqrt{2\pi}}\cdot \frac{1}{n(n-1)}\sum_{i=1}^n\sum_{j\neq i} \hat e_i\hat e_j \frac{1}{\sqrt{d_{ij}+1}}.
\end{eqnarray}
where $d_{ij}=\|x_i-x_j \|^2 = (x_i-x_j)^{\top} (x_i-x_j).$
\end{lem}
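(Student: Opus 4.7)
The plan is to reduce the lemma to a single finite-dimensional Gaussian integral for each index pair. Substituting the Gaussian kernel and the Gaussian measure $\mu$ into the inner integral in \eqref{hat V} and writing $v = x_i - x_j$, the two exponents combine as
$$-\frac{(\alpha^{\top} v)^2}{2h^2} - \frac{\alpha^{\top} \alpha}{2\sigma^2} = -\frac{1}{2}\alpha^{\top} A \alpha, \qquad A = \frac{vv^{\top}}{h^2} + \frac{I_p}{\sigma^2}.$$
The $\alpha$-integral then reduces to the standard multivariate Gaussian normalizing constant $\int_{R^p} \exp(-\tfrac{1}{2}\alpha^{\top} A \alpha)\,d\alpha = (2\pi)^{p/2}|A|^{-1/2}$, so everything boils down to computing $|A|$.

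The key observation is that $A$ is a rank-one perturbation of $\sigma^{-2}I_p$. Its spectrum is $\sigma^{-2}+\|v\|^2/h^2$ along $v$ and $\sigma^{-2}$ with multiplicity $p-1$ on $v^{\perp}$; equivalently, the matrix determinant lemma gives $|A| = \sigma^{-2(p-1)}\bigl(\sigma^{-2}+d_{ij}/h^2\bigr)$. Assembling the prefactors, the $h^{-1}(2\pi)^{-1/2}$ from $K/h$ and the $(2\pi)^{-p/2}\sigma^{-p}$ from $\mu$ combine with $(2\pi)^{p/2}\sigma^{p-1}\bigl(\sigma^{-2}+d_{ij}/h^2\bigr)^{-1/2}$ from the Gaussian integral; the $(2\pi)^{\pm p/2}$ factors cancel and one is left with the per-pair kernel
$$\int_{R^p} \frac{1}{h}K\!\left(\frac{\alpha^{\top}(x_i-x_j)}{h}\right)\mu(\alpha)\,d\alpha = \frac{1}{\sqrt{2\pi}\,\sqrt{\sigma^2 d_{ij}+h^2}},$$
after the routine simplification $(d_{ij}/h^2+\sigma^{-2})^{-1/2} = h\sigma/\sqrt{\sigma^2 d_{ij}+h^2}$. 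Substituting this back into the double sum in \eqref{hat V} yields the first identity, and specializing $\sigma^2 = h^2$ collapses $\sqrt{\sigma^2 d_{ij}+h^2}$ to $h\sqrt{d_{ij}+1}$, producing \eqref{1/h times V_n}.

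There is no substantive obstacle beyond careful bookkeeping of the $\sigma$ and $h$ exponents. The only non-mechanical step is recognizing the rank-one structure of $A$ so that its determinant has the clean scalar form above; this is precisely the place where the Gaussian-on-Gaussian choice pays off, collapsing a $p$-dimensional integral to a one-parameter closed form and thereby removing the computational curse of dimensionality that afflicted \cite{lavergne2012one}'s construction.
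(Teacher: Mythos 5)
Your proposal is correct and follows essentially the same route as the paper: reduce the inner integral to a multivariate Gaussian normalizing constant, exploit the rank-one structure of $\tfrac{(x_i-x_j)(x_i-x_j)^{\top}}{h^2}+\tfrac{1}{\sigma^2}I_p$ to obtain its determinant $\sigma^{-2(p-1)}\bigl(\sigma^{-2}+d_{ij}/h^2\bigr)$, and assemble the prefactors. The only cosmetic difference is that you work with $|A|$ directly (via the matrix determinant lemma) where the paper diagonalizes $\Sigma_{ij}^{-1}=A$ and records $|\Sigma_{ij}|=|A|^{-1}$; the bookkeeping and the final closed form are identical.
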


The ``kernel function" $1/(\sqrt{d_{ij}+1})$ in the new formula does not involve the bandwidth $h$ and the quantity $h$ outside the sum can be leave out from the test statistic, also free of the integration. The resulting test statistic is finally defined as
\begin{eqnarray}\label{V_n:noh}
    V_n &=& \frac{1}{n(n-1)}\sum_{i=1}^{n}\sum_{i\neq j} \hat e_i \hat e_j \frac{1}{\sqrt{d_{ij}+1}}
\end{eqnarray}
where $d_{ij} = \|x_i-x_i\|^2 $.

Note that this test is of the structure of a global smoothing test although it is based on projection and a local smoothing test.
Therefore this projection-based test indeed  bridges between local and global smoothing test.

\begin{rem}
  Here we choose the Gaussian kernel function.
  In fact, for any kernel function $K(\cdot)$ and $\alpha\sim N(0,h^2I_p)$, it is easy to see that
  \begin{eqnarray*}
    && \int_{R^p} K\left(\frac{\alpha^{\top} (x_i-x_j) }{h}\right)\exp(-\frac{\alpha^{\top} \alpha}{2 h^2}) d \alpha \\
    &=& h^p \int_{R^p} K\left(t^{\top} (x_i-x_j)\right)\exp(-t^{\top} t/2) dt .
  \end{eqnarray*}
  The corresponding test based on this integral is equivalent to a global smoothing test since the bandwidth $h$ plays no role in the resulting kernel and then can be left out.
  Besides, from the property of kernel function, we can know the resulting test is just based on the distance $\|x_i-x_j\|$ and the concomitant residuals $\hat e_i$ and $\hat e_j$. But this integral may not always have a closed form and thus, computation might be a concern.
\end{rem}

\setcounter{equation}{0}
\section{Asymptotic properties}\label{sec: asymptotic properties}
  Introduce some notations first.
Let
\begin{eqnarray*}
  \dot g(X,\theta) = \frac{\partial g(X,\theta)}{\partial \theta}.
\end{eqnarray*}
For notational simplicity, write $\dot g_i$ as $\dot g(x_i,\theta_0)$.
Define
\begin{eqnarray*}
  H_{\dot g} = E(\dot g\dot g^{\top})
\end{eqnarray*}
and assume it is a nonsingular matrix.
Other notations are:
$$w_{ij} = \frac{1}{\sqrt{d_{ij}+1}} , \, \, \, E_{1i} = E(\dot g_j w_{ij}|x_i) , $$
and
$$ \tilde w_{ij} = w_{ij}
- 2 \dot g_j^{\top} H_{\dot g}^{-1} E_{1i}
+ \dot g_i^{\top} H_{\dot g}^{-1}  E(\dot g_k E_{1k}^{\top}) H_{\dot g}^{-1} \dot g_j . $$

\subsection{Asymptotics under the null hypothesis}
To get the asymptotic properties under the null hypothesis, we use  U-statistics theory. Note that
$V_n$ is an U-statistic as
\begin{eqnarray*}
  U_n = \frac{1}{n(n-1)}\sum_{i=1}^n\sum_{j\neq i} \varepsilon_i \varepsilon_j h(x_i,x_j)
\end{eqnarray*}
where $h(x_i,x_j) = \frac{1}{2}(\tilde w_{ij} +\tilde w_{ji} )$.

Here we introduce some important quantities.
Let $\lambda_1 \ge \lambda_2 \ge \lambda_3 \ge \dots$ be the corresponding eigenvalues to the  solutions $s_1,s_2,\dots$ satisfying that
\begin{eqnarray*}
  \int_{-\infty}^\infty h(x_i,x_j)s(x_j)dF(x_j) = \lambda s(x_i)
\end{eqnarray*}
where $F(\cdot)$ is the distribution function of $X$.
Define
\begin{eqnarray*}
  \mu_0 = E(\sigma^2_i\dot g_k^{\top} H_{\dot g}^{-1}\dot g_i\dot g_i^{\top} H_{\dot g}^{-1} E_{1k} ) - 2 E(\sigma_i^2 \dot g_i^{\top} H_{\dot g}^{-1} E_{1i} )
\end{eqnarray*}
where $\sigma_i^2 = E(\varepsilon^2|x_i)$.

The limiting null distribution of $T_n \overset{\Delta}{=}n V_n$ is stated in the following theorem.
\begin{thm}\label{theorem: null}
Under the null hypothesis  with the regularity conditions in the Appendix,
  \begin{eqnarray*}
  T_n \overset{d}{\to} \sum_{i=1}^\infty \lambda_i (Z_i^2-1) + \mu_0
\end{eqnarray*}
where $\overset{d}{\to}$ stands for convergence in distribution and $Z_1, Z_2, \dots$ are independent standard normal random variables. If $\sigma_\varepsilon^2 \equiv E(\varepsilon^2|X)$ is a constant free of $X$, then
  \begin{eqnarray*}
    \mu_0 = -\sigma_\varepsilon^2 E(\dot g_i^{\top} H_{\dot g}^{-1} E_{1i}) .
  \end{eqnarray*}
\end{thm}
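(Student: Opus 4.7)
The plan is to combine a first-order Taylor expansion of $\hat e_i$ around $\theta_0$ with the spectral theorem for degenerate $U$-statistics. By standard least-squares theory (e.g.\ White 1981) one has the Bahadur representation
\begin{equation*}
\hat\theta_n-\theta_0 \;=\; H_{\dot g}^{-1}\,\frac{1}{n}\sum_{k=1}^n \varepsilon_k\dot g_k \;+\; o_p(n^{-1/2}),
\end{equation*}
and hence $\hat e_i = \varepsilon_i-\dot g_i^{\top}(\hat\theta_n-\theta_0)+R_i$ with $|R_i|=O_p(\|\hat\theta_n-\theta_0\|^2)$. Substituting into $\hat e_i\hat e_j$ and then into $V_n$ splits $V_n$ into four pieces: (i) the ``clean'' bilinear term $\frac{1}{n(n-1)}\sum_{i\ne j}\varepsilon_i\varepsilon_j w_{ij}$; (ii) cross-products $\pm\varepsilon_i \dot g_j^{\top}(\hat\theta_n-\theta_0)w_{ij}$ (and symmetric); (iii) pure quadratic terms $\dot g_i^{\top}(\hat\theta_n-\theta_0)(\hat\theta_n-\theta_0)^{\top}\dot g_j\,w_{ij}$; and (iv) remainders from $R_i$ that, because $w_{ij}\le 1$ is bounded, are easily shown to be $o_p(n^{-1})$.

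Next I would insert the linear expansion of $\hat\theta_n-\theta_0$ into pieces (ii) and (iii); these become triple sums in $(i,j,k)$ involving $\varepsilon_i\varepsilon_k w_{ij}\dot g_j^{\top}H_{\dot g}^{-1}\dot g_k$ (and a similar quadruple sum for (iii)). For the \emph{off-diagonal} contribution ($i,j,k$ all distinct), the sum over $j$ can be replaced by its conditional mean $E_{1i}=E(\dot g_j w_{ij}\mid x_i)$ up to an $o_p$ error by a $V$-to-$U$ statistic comparison and LLN for conditionally centered triangular arrays. Collecting the three off-diagonal contributions exactly reproduces
\begin{equation*}
U_n \;=\; \frac{1}{n(n-1)}\sum_{i\ne j}\varepsilon_i\varepsilon_j\,h(x_i,x_j),\qquad h(x_i,x_j)=\tfrac{1}{2}\bigl(\tilde w_{ij}+\tilde w_{ji}\bigr),
\end{equation*}
i.e.\ the correction terms in $\tilde w_{ij}$ are precisely what the $\hat\theta_n$-estimation error contributes. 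Because $E(\varepsilon\mid X)=0$, the first Hoeffding projection vanishes and $U_n$ is first-order degenerate, so Serfling (1980, Thm.\ 5.5.2) yields $nU_n\xrightarrow{d}\sum_{i=1}^\infty\lambda_i(Z_i^2-1)$ with $\{\lambda_i\}$ the eigenvalues of the integral operator with kernel $h$ against $F$.

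Finally, I would extract the bias $\mu_0$ from the \emph{diagonal} contributions that were excluded above. Terms with $k=i$ in the cross-product triple sum yield, after multiplying by $n$, a limit $-2E(\sigma_i^2\dot g_i^{\top}H_{\dot g}^{-1}E_{1i})$ by the LLN, while diagonal pieces of the quadratic term (with $k=\ell=i$ in the $\hat\theta_n-\theta_0$ expansion) yield $E(\sigma_i^2\dot g_k^{\top}H_{\dot g}^{-1}\dot g_i\dot g_i^{\top}H_{\dot g}^{-1}E_{1k})$; their sum is exactly $\mu_0$. When $\sigma_\varepsilon^2$ is constant one can factor it out and use $E(\dot g_k\dot g_k^{\top})=H_{\dot g}$ to collapse the quadratic piece into $\sigma_\varepsilon^2 E(\dot g_i^{\top}H_{\dot g}^{-1}E_{1i})$, which cancels half of the first contribution and leaves $\mu_0=-\sigma_\varepsilon^2 E(\dot g_i^{\top}H_{\dot g}^{-1}E_{1i})$.

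The main obstacle will be the careful bookkeeping in separating diagonal from off-diagonal contributions across the triple and quadruple sums so that every non-vanishing term after scaling by $n$ is accounted for in $\mu_0$, together with verifying $E\,h(X_1,X_2)^2<\infty$ (which is needed to apply the spectral theorem) using the boundedness of $w_{ij}$ and finite moment assumptions on $\dot g$. Controlling the $o_p(n^{-1/2})$ and $R_i$ remainders is comparatively routine because $w_{ij}\le 1$, so no nonparametric smoothing rate enters.
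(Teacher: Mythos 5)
Your proposal is correct and follows essentially the same route as the paper: the same decomposition of $V_n$ into the clean bilinear term, cross-product and quadratic correction terms via the Bahadur representation of $\hat\theta_n-\theta_0$, the same separation of diagonal contributions to produce $\mu_0=\mu_{v_3}-2\mu_{v_2}$, and the same appeal to the spectral theorem for degenerate $U$-statistics applied to the symmetrized kernel $h(x_i,x_j)=\tfrac12(\tilde w_{ij}+\tilde w_{ji})$. Your remarks on verifying $E\,h(X_1,X_2)^2<\infty$ and on the remainder control via the boundedness of $w_{ij}$ are sensible additions but do not change the argument.
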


\begin{rem}
 This theorem shows that the limiting null distribution is intractable and thus a Monte Carlo approximation is necessary. In the numerical studies, we use the wild bootstrap to implement the testing procedure.
 \end{rem}

\subsection{Power study}
Suppose the sample $\{ (x_i,y_i) \}_{i=1}^n$ is from the following sequence of models
\begin{eqnarray}\label{model: general}
  Y = g(X,\theta_0)+\delta_n\ell(X)+\varepsilon .
\end{eqnarray}
The values $\delta_n\to 0$ correspond to the local alternative models, fixed nonzero $\delta_n$ to the global alternative model and $\delta_n=0$ to the null model.
Let $\ell_j = \ell(x_j)$ and $M_i = E\left(\ell_j(\tilde w_{ij}+\tilde w_{ji})|X_i=x_i\right) $ for notational simplicity.
\begin{thm}\label{theorem: alternative}
 With the regularity conditions in the {\it Appendix},
  \begin{enumerate}
    \item Under the global alternative model with a fixed nonzero $\delta_n$, in probability,
    \begin{eqnarray*}
      T_n/n \overset{p}{\to} \mu_1
    \end{eqnarray*}
    where $\mu_1 = E(\delta_n^2 \ell_i \ell_j w_{ij}) $.

   \item Under the local alternative model with $\delta_n\to 0$ and $\sqrt n \delta_n\to\infty $ as $n\to\infty$,
  	\begin{eqnarray*}
  		T_n/(n \delta_n^2) \overset{p}{\to} \Delta_\mu ,
  	\end{eqnarray*}
   where  $\Delta_\mu = E(\ell_i\ell_j \tilde w_{ij})  $.
    \item Particularly, under the local alternative model with $\delta_n = n^{-1/2}$,
    \begin{eqnarray*}
      T_n \overset{d}{\to} N(\mu_{1n},\Sigma),
    \end{eqnarray*}
   where $\mu_{1n} = \Delta_\mu+\mu_0$ and $\Sigma = Var(\varepsilon_i M_i)$.
  	  \end{enumerate}
\end{thm}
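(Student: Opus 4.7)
The plan is to start from a controlled Taylor expansion of the sample residual and then handle the three scalings separately. Under model \eqref{model: general} I would write
\begin{eqnarray*}
\hat e_i = \varepsilon_i + \delta_n \ell_i - \dot g_i^{\top}(\hat\theta_n - \theta_0) + R_{n,i},
\end{eqnarray*}
where $R_{n,i}$ is a second order remainder in $\hat\theta_n - \theta_0$, and derive from the first order condition of the least squares problem that
\begin{eqnarray*}
\hat\theta_n - \theta_0 = H_{\dot g}^{-1}\,\frac{1}{n}\sum_{k=1}^n \dot g_k(\varepsilon_k + \delta_n \ell_k) + o_p(n^{-1/2}+\delta_n).
\end{eqnarray*}
Substituting both expansions into $V_n$ yields a sum of bilinear U-statistics whose kernels are products of $\varepsilon$, $\ell$, $\dot g$ and $(\hat\theta_n-\theta_0)$ weighted by $w_{ij}$. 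Replacing $\hat\theta_n-\theta_0$ by its leading $n^{-1}\sum_k$ expression converts the score-correction pieces into triple sums that collapse, by a law of large numbers applied to the inner index, into double sums whose effective weight becomes $\tilde w_{ij}$. This absorption step is precisely the mechanism producing the two correction terms in the definition of $\tilde w_{ij}$.

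For part 1, with $\delta_n$ fixed and nonzero, the dominant contribution after dividing by $n$ is the signal U-statistic $\delta_n^2(n(n-1))^{-1}\sum_{i\ne j}\ell_i\ell_j w_{ij}$, which by the U-statistic LLN converges in probability to $\delta_n^2 E(\ell_i\ell_j w_{ij}) = \mu_1$; the remaining pieces are $O_p(n^{-1})$ or $O_p(\delta_n/\sqrt n)$ and hence $o_p(1)$ after the $1/n$ normalisation. For part 2, with $\delta_n\to 0$ and $\sqrt n\delta_n\to\infty$, the same strategy applies but now $\hat\theta_n - \theta_0 = O_p(\delta_n)$; after performing the absorption step the leading term of $T_n$ becomes the non-degenerate U-statistic $\delta_n^2(n-1)^{-1}\sum_{i\ne j}\ell_i\ell_j\tilde w_{ij}$ of order $n\delta_n^2$, so that $T_n/(n\delta_n^2)\to \Delta_\mu = E(\ell_i\ell_j\tilde w_{ij})$. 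Cross terms of order $\delta_n/\sqrt n$ and the pure noise U-statistic of order $n^{-1}$ are both $o(\delta_n^2)$ relative to $n\delta_n^2$ under $\sqrt n\delta_n\to\infty$, and are therefore absorbed into the remainder.

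For part 3, with $\delta_n = n^{-1/2}$, the signal, noise and cross scales all equalise, and I would perform a Hoeffding-style decomposition on the corrected kernel $\tilde w_{ij}$. The bilinear signal piece produces the deterministic shift $n\delta_n^2 E(\ell_i\ell_j\tilde w_{ij}) = \Delta_\mu$, the parameter estimation correction contributes $\mu_0$ exactly as in Theorem \ref{theorem: null}, and the mixed $\varepsilon\ell$ piece, after the same collapse that defines $\tilde w_{ij}$, reduces to the linear statistic $n^{-1/2}\sum_i \varepsilon_i M_i$, which by the Lindeberg-Feller CLT is asymptotically normal with mean zero and variance $\Sigma = \mathrm{Var}(\varepsilon_i M_i)$. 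Combining the deterministic shifts $\Delta_\mu + \mu_0 = \mu_{1n}$ with this normal summand yields the stated $N(\mu_{1n},\Sigma)$ limit.

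The main obstacle will be part 3, where the many cross products generated by simultaneously expanding $\hat e_i\hat e_j$ and $\hat\theta_n - \theta_0$ must be carefully organised so that the dominant $O_p(1)$ linear-in-$\varepsilon$ term is correctly isolated, the effective kernel reduces exactly to $\tilde w_{ij}$, and every other $O_p(1)$ component either contributes to the deterministic shift or is shown to be asymptotically negligible in the stated normal approximation. The bookkeeping of triple-index sums arising from the substitution of the stochastic expansion of $\hat\theta_n - \theta_0$, together with the uniform moment bounds needed for the Hoeffding projection and for the Lindeberg-Feller CLT, is the most delicate portion of the argument.
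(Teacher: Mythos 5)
Your route is the same one the paper takes: expand $\hat e_i=\varepsilon_i+\delta_n\ell_i-\dot g_i^{\top}(\hat\theta_n-\theta_0)+R_{n,i}$, substitute the stochastic expansion of $\hat\theta_n-\theta_0$, collapse the resulting triple sums so that the estimation effect is absorbed into the modified kernel $\tilde w_{ij}$, and then read off the three regimes from the sizes of the pure-noise, cross, and signal pieces. Parts 1 and 2 are handled essentially as in the paper and the order assessments you give ($O_p(\delta_n n^{-1/2})$ for the cross term, $O_p(n^{-1})$ for the noise term) are correct. One caveat on part 1: under a fixed global alternative $\hat\theta_n\to\theta^\ast=\theta_1\neq\theta_0$ in general (Lemma \ref{asymptotic property of theta estimation}, item 3), so your expansion of $\hat\theta_n-\theta_0$ with an $o_p(n^{-1/2}+\delta_n)$ remainder is not valid there; the argument should be centered at $\theta_1$, where only consistency and $\hat\theta_n-\theta_1=O_p(n^{-1/2})$ are needed to kill the $V_2$ and $V_3$ pieces.

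The genuine gap is in part 3, and it is one you share with the paper. After the absorption step, $T_n=nV_n$ contains the degenerate quadratic form $\frac{n}{n(n-1)}\sum_{i\neq j}\varepsilon_i\varepsilon_j\tilde w_{ij}$, which is precisely the statistic whose limit is $\sum_{i}\lambda_i(Z_i^2-1)$ in Theorem \ref{theorem: null}; it is $O_p(1)$, not $o_p(1)$, and nothing about the scaling $\delta_n=n^{-1/2}$ makes it disappear. Your sketch accounts for the signal piece ($\Delta_\mu$), the estimation correction ($\mu_0$), and the mixed $\varepsilon\ell$ piece (the $N(0,\Sigma)$ term), and asserts that every remaining $O_p(1)$ component is either a deterministic shift or negligible, but this fourth component is neither: the correct limit is the joint sum of the weighted chi-square and the normal term, shifted by $\Delta_\mu+\mu_0$, not a pure normal. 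The paper's own proof has the same lacuna --- the term $\frac{1}{n(n-1)}\sum_{i\neq j}\varepsilon_i\varepsilon_j\tilde w_{ij}$ is present in the penultimate display and is silently dropped in passing to \eqref{T_n in local} --- so you have faithfully reproduced the argument including its weakest step. To make part 3 rigorous you must either retain the weighted chi-square component in the stated limit or supply an argument (which does not exist under this scaling) for discarding it.
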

\begin{rem}
  This theorem shows that the test behaves like a global smoothing test although it is based on the Zheng's test with projected predictors.
\end{rem}

\setcounter{equation}{0}
\section{Numerical studies}\label{sec: numerical analysis}
\subsection{Simulations}
To study the performance of our test, we conduct some simulations under different model settings.
In scenario 1, the dataset is generated from a sequence of models that are oscillating/high-frequent under the alternatives; correspondingly in scenario 2, the dataset is from a sequence of models that are low-frequent under the alternatives; in scenario 3, we study the impact of correlation between the components of $X$ to our test.
For each scenario, we also investigate the influence of the dimension $p$ to the competitors.
From scenario 1 to scenario 3, the null models are  linear. So in scenario 4, we consider a nonlinear model as the null model.
%-----  details for each study -----
Scenario 1 is designed to exam our test under the oscillating alternative models which usually are in favor of local smoothing tests.
We then compare our test with a typical local smoothing test:
% Zheng (1996)
\cite{zheng1996consistent}'s test.
As the null model is linear, which is under the single-index framework, we then also consider \cite{guo2016model}'s test as a competitor.
Another competitor is
\cite{stute1998bootstrap}'s test since it is a typical global smoothing test.
Our test is denoted as $T_n$  and
\cite{guo2016model}'s test, \cite{stute1998bootstrap}'s test and \cite{zheng1996consistent}'s test are denoted as $T^{GWZ}$, $T^{S}$ and $T^{Zh}$ respectively.
For the kernel estimation in \cite{guo2016model}'s test and \cite{zheng1996consistent}'s test, the choices of the bandwidth $h$ are the same as those in \cite{guo2016model}.
The critical values for our test are  the $95\%$ quantile of $300$ wild bootstrap samples.

%------ simulation models --------
{\it Scenario 1. } Consider
\begin{eqnarray*}
  Y &=& \beta^{\top} X + a\cdot cos(\beta^{\top} X)+ \varepsilon .
\end{eqnarray*}
$a=0$ corresponds to the null hypothesis and $a\neq 0$ to the alternative hypothesis.
To examine the power performance,
$a = 0.2, 0.6, 1$. % $a = 0.2, 0.4, 0.6, 0.8, 1$.
The parameter $\beta = (1,1,\dots,1)/\sqrt p$, $\|\beta\|=1$.
The predictors $x_i,i=1,\dots,n$ are independently generated from the multivariate normal distribution $N(0,I_p)$.
The errors $\varepsilon_i, i=1,\dots,n$ are independently drawn from the standard normal distribution $N(0,1)$.
The dimension $p = 2, 4, 8$ and the sample size $n = 200$.
We conducted 1000 experiments for each scenario.
The empirical size and powers are presented in Table \ref{tab: Scenario 1}.

\begin{center}
Table \ref{tab: Scenario 1} about here.
\end{center}

% --- illustration about the table -----
The results show that our test can maintain the significance level well for both dimensions $p=2$ and $p=8$ while $T^S$ and $T^{Zh}$ cannot control type I error under the case $p=8$.
When $a$ increases with larger deviation from the null, the powers reasonably increase for all the competitors, but $T_n$ surpasses $T^S$ and $T^{Zh}$ in both settings with the dimension $p=2$ and $p=8$ and is comparable to $T^{GWZ}$. These findings suggest that the proposed test $T_n$ can have good performance for the oscillating alternative model although it is a global smoothing test. In other words, it does benefit the merit of local smoothing test. We also note that the adaptive-to-model test $T^{GWZ}$ works well slightly better than  $T_n$. This is because $T^{GWZ}$ fully uses the dimension reduction structure under the null and is also a local smoothing test. Compared with the global smoothing test $T^S$, $T_n$ performs much better. Further, The local smoothing test $T^{Zh}$ clearly suffers from the data sparseness in high dimensional space.

% ----- transfer from scenario 1 to scenario 2 ----
Next, we study the tests performances under a low-frequency model.

{\it Scenario 2. } Consider
\begin{eqnarray*}
  Y &=& \beta_1^{\top} X + a\cdot 0.3 (0.5+\beta_2^{\top} X)^3 + \varepsilon .
\end{eqnarray*}
In this scenario, we test against a low-frequency alternative model.
The parameters are $\beta_1 = (\underbrace{1,\dots,1}_{p/2},0,\dots,0)^{\top} /\sqrt{p/2}$ and $\beta_2 = (\underbrace{0,\dots,0}_{p/2},1,\dots,1)^{\top} /\sqrt{p/2} $.
The sample size $n=200$ and $p = 2, 8$.
$X$ and $\varepsilon$ follow the same distribution in {\it Scenario 1}, i.e. $X\sim N(0,I_p)$ and $\varepsilon\sim N(0,1) $.
We use $a = 0.2,  0.6,  1$ under the alternative models.
The plots of the power line is shown in Figure \ref{fig: scenario 2}.

\begin{center}

Figure \ref{fig: scenario 2} about here.
\end{center}

The results give us the following observations. When $p=2$, the global smoothing test $T^S$ works well and $T_n$ performs similarly or very slightly worse compared with $T^S$. Two local smoothing tests $T^{GWZ}$ and $T^{Zh}$ have inferior performance than $T^S$ and $T_n$. This again shows that $T_n$ has the advantage a global smoothing test should have. This justifies that low-frequency models are in favor of global smoothing tests.  When $p=8$, $T^S$ is seriously  affected by the dimension, the influence of the dimension $p$ to $T_n$ is limited.

In the first two scenarios, $X\sim N(0,I_p)$  and thus the components of $X$ are uncorrelated from each other. Now we consider correlated case.

{\it Scenario 3. } Consider
\begin{eqnarray*}
  Y = \beta^{\top} X + a\cdot \exp(-(\beta^{\top} X)^2) + \varepsilon .
\end{eqnarray*}
Two settings are considered: %$X\sim N(0,I_p)$ and
$X\sim N(0,\Sigma)$ where $\Sigma = (0.5^{|i-j|})_{p\times p} $.
The error $\varepsilon\sim N(0,1)$.
We test with dimension $p = 2, 4, 8$, sample size $n=200$ and parameter $\beta = (1,1,\dots,1)^{\top} /\sqrt{p}$.
Results are presented in %{\it Figure \ref{fig: scenario 3 independent}} and
{\it Table \ref{tab: Scenario 3 dependent}}.

\begin{center}
%Figure \ref{fig: scenario 3 independent} and
Table \ref{tab: Scenario 3 dependent} about here.
\end{center}
% analyse of scenario 3
When $p=2$ and $p=4$, $T^S$, $T^{GWZ}$ and our test have similar powers while $T^{Zh}$ does not work well.
When the dimension is raised up to $p=8$, $T^{GWZ}$ and $T_n$ become the winner.
As $T^{GWZ}$ adopts the dimension reduction structure under the null in this setting, its good performance is understandable.
$T_n$, however, requires no model structure information and performs similarly as $T^{GWZ}$.

% --------  nonlinear null model  -------------
The above three scenarios are all concerned with the linear model under the null hypothesis, therefore a nonlinear null model is used in the following scenario.
Denote $X_i$ as the $i$th component of $X$.

{\it Scenario 4. } Consider
% Y = exp(c(1)*x(:,1))+c(2)*x(:,2).^3+c(3)*sin(pi*x(:,3))+c(4)*abs(x(:,4))+c(5)*x(:,5).*x(:,6) + e
\begin{eqnarray*}
  Y = \exp(c_1 X_1) + (c_2 X_2)^3 + c_3 \sin(\pi X_3) + c_4 |X_4| + c_5 X_5\cdot X_6
  + a \cdot \cos(\beta^{\top} X) + \varepsilon
\end{eqnarray*}
where $c_1 = c_2 = \dots = c_5 = 1/\sqrt 6 $ and $\beta = (0,1,1,0,\dots,0)^\top$.

This  model does not have a dimension reduction structure under the null hypothesis and thus it not in favor of $T^{GWZ}$ that is designed for  single index models.
 To make a comparison, we adopt its model adaptation idea by using the following test statistic as
\begin{eqnarray*}
% \nonumber % Remove numbering (before each equation)
  n h^{\hat q/2}\cdot \frac{1}{n(n-1)}\sum_{i=1}^{n}\sum_{j\neq i}
  \hat e_i \hat e_j \frac{1}{h^{\hat q}}K(\frac{\hat B_{\hat q}^\top(x_i-x_j)}{h}) .
\end{eqnarray*}

\begin{center}
  Figure \ref{fig: scenario 4} about here.
\end{center}

The results in Figure \ref{fig: scenario 4} clearly suggest that the proposed test $T_n$ performs much better than the competitors no matter they are either local or global smoothing tests. This further confirms the advantages of the new method.

%\noindent\textcolor{red}{\rule{\linewidth}{1pt}}

\subsection{Real data analysis}
% \begin{itemize}
%   \item Autompg, Guo et al. (2016), p-value = 0
%   \item Boston housing data, Zhu et al. (2015), p-value = 0.012, bootstrap sample size: 2000
% \end{itemize}

We now analyze the Auto MPG data set that can be %\href{http://archive.ics.uci.edu/ml/datasets/Auto+MPG}{Auto MPG Data Set} and it can be
downloaded from the UCI Machine Learning Repository \cite{Lichman:2013}.
% Quinlan (1993)
\cite{quinlan1993combining} firstly used the data set and recently
% Xia (2007)
\cite{xia2007constructive} and
% Guo et al. (2016)
\cite{guo2016model} as an illustration for their methods.
A linear regression model was build in % Quinlan (1993)
\cite{quinlan1993combining}.
%In Guo et al. (2016), the hypothetical linear null model was however rejected.
Here we use the proposed test to check the adequacy of the linear model.
The response variate $Y$ is mpg: miles per gallon.
The first 6 attributes, noted from $X_1$ to $X_6$, includes running year of the model, acceleration time from still state to 60 miles per hour, car weight, horsepower, displacement of the engine and the number of cylinders.
For the multi-valued independent variable origin, we introduce dummy variables as
% Guo et al. (2016)
\cite{guo2016model} and % Xia (2007)
\cite{xia2007constructive} did.
One of the new indicator variables $X_7=1$ if the car is from America; otherwise, $X_7 = 0$.
Another dummy variable $X_8$ indicates whether the car is from Europe.
 The attributes are standardized one by one.
%Based on 2000 bootstrap experiments,
The $p$-value is about $0$ and
thus, the linear model is not suitable for this data set.
The result is coincident with % Guo et al. (2016)
\cite{guo2016model}.

\setcounter{equation}{0}
\section{Discussions}\label{sec: discussion}
In this paper, we build a bridge between local smoothing test and global smoothing test and propose a test that is local smoothing-based is of the global smoothing nature. Therefore, the test benefits both advantages of these two types of testing procedures. The theoretical properties and empirical studies confirm this nice feature. The approach may be applicable to other types of data and testing problems. These are ongoing.

%\noindent\textcolor{red}{\rule{\linewidth}{1pt}}

\setcounter{equation}{0}
\section*{Acknowledgments}
The research described herewith was supported by a grant from The University Grants Council of Hong Kong.

\newpage
\setcounter{equation}{0}
\section*{Appendix}

\begin{proof}[Proofs of Lemma 2.2.]
Recall the formula of $\hat V$, the integral can be computed  as
\begin{eqnarray}\label{product of density}
  && \int_{R^p} \frac{1}{h}K\left(\frac{\alpha^{\top}(x_i-x_j)}{h}\right) \mu(\alpha) d \alpha \\
  &=& h^{-1} \int_{R^p}
  (2\pi)^{-1/2} \exp\left( -\frac{\alpha^{\top}(x_i-x_j)(x_i-x_j)^{\top} \alpha}{2 h^2} \right)
  \cdot (2\pi)^{-p/2} \sigma_\alpha^{-p} \exp \left( -\frac{\alpha^{\top} \alpha}{2\sigma^2} \right)
  d \alpha \nonumber\\
  &=& h^{-1} (2\pi)^{-1/2} \sigma_\alpha^{-p}
  \int_{R^p}
  (2\pi)^{-p/2}
   \exp\left( - \frac{1}{2}\alpha^{\top}
   \left(
  \frac{(x_i-x_j)(x_i-x_j)^{\top}}{h^2}+\frac{1}{\sigma^2}I_p
   \right)
   \alpha
   \right)
  d \alpha \nonumber .
\end{eqnarray}
Define $\Sigma_{ij}^{-1} = \frac{(x_i-x_j)(x_i-x_j)^{\top}}{h^2}+\frac{1}{ \sigma^2}I_p $. Then the integral \eqref{product of density} becomes
\begin{eqnarray}
  &&
  h^{-1} (2\pi)^{-1/2} \sigma^{-p}
  \int_{R^p}
  (2\pi)^{-p/2}
  \exp\left( - \frac{1}{2}\alpha^{\top}
  \Sigma_{ij}^{-1}
  \alpha
  \right)
  d \alpha \nonumber \\
  &=&
  h^{-1} (2\pi)^{-1/2} \sigma^{-p}
  |\Sigma_{ij}|^{1/2}
  \int_{R^p}
  (2\pi)^{-p/2}
  |\Sigma_{ij}|^{-1/2}
  \exp\left( - \frac{1}{2}\alpha^{\top}
  \Sigma_{ij}^{-1}
  \alpha
  \right)
  d \alpha \nonumber \\
  &=&
  h^{-1} (2\pi)^{-1/2} \sigma^{-p}
  |\Sigma_{ij}|^{1/2} \label{simplified with Sigma} .
\end{eqnarray}
Next we study the property of $\Sigma_{ij}$ to get $|\Sigma_{ij}|$.
Let
\begin{eqnarray}
  A_{ij} &=& \frac{(x_i-x_j)(x_i-x_j)^{\top}}{h^2} \nonumber .
\end{eqnarray}
The matrix $A_{ij}$ is symmetric and $rank(A_{ij}) = 1$.
By some algebraic calculations, $A_{ij}$ has a nonzero eigenvalue $\frac{d_{ij}}{h^2}$ %where $d_{ij}$ is similar to a diagonal matrix
%\[
%\begin{pmatrix}
%    \frac{d_{ij}}{h^2} & & & \\
%    & 0 & & \\
%    & & \ddots & \\
%    & &  & 0
%\end{pmatrix}
%\]
where $d_{ij} = \|x_i-x_j \|^2 = (x_i-x_j)^{\top} (x_i-x_j)$.
Therefore, it is well known that $\Sigma_{ij}^{-1} = A_{ij}+I_p/\sigma^2$ can be decomposed as
\[
A\begin{pmatrix}
    \frac{d_{ij}}{h^2}+\frac{1}{\sigma^2} & & & \\
    & \frac{1}{\sigma^2} & & \\
    & & \ddots & \\
    & &  & \frac{1}{\sigma^2} .
\end{pmatrix} A^{\top},
\]
for some non-singular matrix $A$ where the elements on the diagonal of the  matrix inside are the eigenvalues of $\Sigma_{ij}^{-1}$ and the determination is
\begin{eqnarray}\label{determination of Sigma}
  |\Sigma_{ij}| = (\frac{d_{ij}}{h^2}+\frac{1}{\sigma^2} )^{-1} (\sigma^2)^{p-1}.
\end{eqnarray}
From  \eqref{product of density} -- \eqref{determination of Sigma}, the estimate in \eqref{hat V} can be written as
\begin{eqnarray}
  \hat V &=& \frac{1}{n(n-1)}\sum_{i=1}^n\sum_{j\neq i} \hat e_i\hat e_j
  h^{-1} (2\pi)^{-1/2} \sigma^{-p}
  (\frac{d_{ij}}{h^2}+\frac{1}{\sigma^2} )^{-1/2} \sigma^{p-1} \nonumber\\
  &=& \frac{1}{\sqrt{2\pi}} \cdot \frac{1}{n(n-1)}\sum_{i=1}^n\sum_{j\neq i} \hat e_i\hat e_j
  \frac{1}{\sqrt{\sigma^2 d_{ij}+h^2}}.
\end{eqnarray}
When $\sigma^2$ is chosen to be $h^2$, we have
\begin{eqnarray}\label{1/h times V_n}
   \hat V = \frac{1}{h\sqrt{2\pi}}\cdot \frac{1}{n(n-1)}\sum_{i=1}^n\sum_{j\neq i} \hat e_i\hat e_j \frac{1}{\sqrt{d_{ij}+1}}.
\end{eqnarray}
\end{proof}
\subsection*{Designed Conditions}
The following conditions are for  the consistency and asymptotic normality of $\hat\theta_n$. %for the theorems in {\it Section} \ref{sec: asymptotic properties}.
\begin{enumerate}[label=(\alph*)]
\item $\{(x_i,y_i)\}_{i=1,\dots,n} $ are i.i.d. random samples from $(X,Y) $ in $R^p\times R $ and $EY^2<\infty $.
% \item $E(\varepsilon^2)<\infty$.
\item The  parameter $\Theta$ is compact and convex.
\item The regression function $g(x,\theta)$ is a Borel measurable real function on $R^p$ for each $\theta$ and is twice continuously differentiable with respect to $\theta$ for each $x$.
\item Let  $\|\cdot\|$ represent the Euclidean norm.
\begin{eqnarray*}
  && E(\sup_{\theta\in \Theta} g^2(X,\theta))<\infty , \\
  && E(\sup_{\theta\in\Theta} \| \frac{\partial g(X,\theta)}{\partial \theta}\cdot \frac{\partial g(X,\theta)}{\partial \theta^{\top}} \|)<\infty,  \\
  && E(\sup_{\theta\in\Theta} \| (Y-g(X,\theta))^2 \cdot \frac{\partial g(X,\theta)}{\partial \theta}\cdot \frac{\partial g(X,\theta)}{\partial \theta^{\top}} \| < \infty, \\
  && E(\sup_{\theta\in\Theta}\| (Y-g(X,\theta)) \cdot \frac{\partial^2 g(X,\theta)}{\partial\theta\partial\theta^{\top}} \|)<\infty.
\end{eqnarray*}
\item There exists a unique minimizer $\theta^\ast$ such that
\begin{eqnarray*}
  \theta^\ast = \arg\inf_{\theta\in R^d} E(Y-g(X,\theta))^2 .
\end{eqnarray*}
Under the null hypothesis, $\theta^\ast$ is an interior point of $\Theta$.
% \item $P\left(\ell(X) = c^{\top} \dot g(X,\theta_0)\right)<1 $ for all $c\in R^d $.
\item The matrix $E(\frac{\partial g(X,\theta)}{\partial \theta}\cdot \frac{\partial g(X,\theta)}{\partial \theta^{\top}})$
is nonsingular.
\end{enumerate}

%\begin{rem}
%  The above conditions assure the consistency and asymptotic normality of $\hat\theta_n$.
%\end{rem}

The following lemma shows the asymptotic property of $\hat \theta_n$.
\begin{lem}\label{asymptotic property of theta estimation}
Suppose that the above conditions are satisfied, we have the following asymptotic properties.
Denote $H_{\dot g} = E(\dot g(X,\theta^\ast) \dot g(X,\theta^\ast)^{\top}) $ .
%%  Under the null and local alternative hypothesis, we have
%  \begin{eqnarray*}
%    \sqrt n (\hat\theta_n - \theta_0) = H_{\dot g}^{-1}\frac{1}{\sqrt n}\sum_{i=1}^n e_i \dot g(x_i,\theta) + o_p(1)
%  \end{eqnarray*}
%  where $H_{\dot g} = E(\dot g\dot g^{\top}) $.
%More specifically,
\begin{enumerate}
  \item Under the null hypothesis, $\theta^\ast = \theta_0$ and
  \begin{eqnarray*}
  \sqrt n (\hat\theta_n - \theta^\ast) = H_{\dot g}^{-1}\frac{1}{\sqrt n}\sum_{i=1}^n \varepsilon_i \dot g(x_i,\theta^\ast)+o_p(1).
\end{eqnarray*}
  \item Under the local alternative models with $\delta_n\to 0$, $\theta^\ast = \theta_0$ and
  \begin{eqnarray*}
  \sqrt n (\hat\theta_n - \theta^\ast) = H_{\dot g}^{-1}\frac{1}{\sqrt n}\sum_{i=1}^n \varepsilon_i \dot g(x_i,\theta^\ast) + \sqrt n \delta_n \cdot H_{\dot g}^{-1}E(\ell \dot g)+o_p(1) .
\end{eqnarray*}
  \item Under the global alternative model with a fixed $\delta_n$, $\theta^\ast = \theta_1$ where
  \begin{eqnarray*}
    \theta_1 = \arg\min\limits_{\theta\in\Theta} E(g(X,\theta_0)-g(X,\theta)+\delta_n \ell(X))^2
  \end{eqnarray*}
  and
  \begin{eqnarray*}
  % \nonumber % Remove numbering (before each equation)
    \hat\theta_n-\theta^\ast &=& O_p(\frac{1}{\sqrt n}).
  \end{eqnarray*}
\end{enumerate}

\end{lem}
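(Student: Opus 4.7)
The plan is to treat $\hat\theta_n$ as a standard M-estimator and proceed in the classical two-stage fashion: first establish consistency toward the appropriate pseudo-true value $\theta^\ast$ in each of the three scenarios, and then derive a Bahadur-type linear expansion by Taylor-expanding the first-order condition around $\theta^\ast$.

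For consistency, I would apply a uniform law of large numbers to the criterion $Q_n(\theta) = n^{-1}\sum_i (y_i-g(x_i,\theta))^2$. The moment-dominance condition in (d) together with continuity of $g(\cdot,\theta)$ in $\theta$ from (c) and compactness of $\Theta$ from (b) gives
\begin{equation*}
\sup_{\theta\in\Theta}\,|Q_n(\theta)-Q(\theta)|\xrightarrow{p}0,\qquad Q(\theta)=E(Y-g(X,\theta))^2.
\end{equation*}
Identification via (e) then yields $\hat\theta_n\to\theta^\ast$ in probability in all three regimes; under the null and local alternatives one checks directly that the minimizer of $Q$ equals $\theta_0$ (the cross-term involving $\varepsilon$ or $\delta_n\ell$ being either zero in expectation or $O(\delta_n)\to 0$), while under a fixed $\delta_n$ the minimizer is the pseudo-true value $\theta_1$ defined in the statement. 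Interiority of $\theta^\ast$, guaranteed by (e) under the null and by continuity arguments for small $\delta_n$, allows us to work with the first-order condition in what follows.

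Next, I would use the FOC $n^{-1}\sum_i(y_i-g(x_i,\hat\theta_n))\dot g(x_i,\hat\theta_n)=0$ and Taylor-expand around $\theta^\ast$:
\begin{equation*}
0=\frac{1}{n}\sum_{i=1}^n (y_i-g_i^\ast)\dot g_i^\ast \;-\; A_n(\tilde\theta_n)\,(\hat\theta_n-\theta^\ast),
\end{equation*}
where $A_n(\theta)=n^{-1}\sum_i[\dot g(x_i,\theta)\dot g(x_i,\theta)^{\top}-(y_i-g(x_i,\theta))\ddot g(x_i,\theta)]$ and $\tilde\theta_n$ lies on the segment between $\hat\theta_n$ and $\theta^\ast$. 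Condition (d) gives a second ULLN so that $A_n(\tilde\theta_n)\to A(\theta^\ast)$ in probability. Under the null, $(Y-g(X,\theta^\ast))=\varepsilon$ has zero conditional mean, so $A(\theta^\ast)=H_{\dot g}$, which is nonsingular by (f); inverting and plugging in $y_i-g_i^\ast=\varepsilon_i$ gives part~1. Under the local alternative, writing $y_i-g_i^\ast=\varepsilon_i+\delta_n\ell_i$ produces the same $H_{\dot g}^{-1}$ factor multiplying $n^{-1/2}\sum\varepsilon_i\dot g_i^\ast+\sqrt n\,\delta_n\cdot n^{-1}\sum\ell_i\dot g_i^\ast$, and an LLN on the second sum yields the stated $\sqrt n\delta_n H_{\dot g}^{-1}E(\ell\dot g)$ term (the extra $\delta_n\ell\ddot g$ piece inside $A_n$ is $o_p(1)$). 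Under the fixed-$\delta_n$ global alternative, the FOC at $\theta_1$ forces $E[(g(X,\theta_0)-g(X,\theta_1)+\delta_n\ell(X))\dot g(X,\theta_1)]=0$, so the leading stochastic term in the expansion is a centered average that is $O_p(n^{-1/2})$; combined with the nonsingularity of $A(\theta_1)$ this delivers the $\sqrt n$-consistency claim of part~3.

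The main obstacle is technical rather than conceptual: I must verify that $A_n(\tilde\theta_n)$ is nonsingular with probability tending to one in every case, which in the global-alternative regime requires checking invertibility of the sandwich matrix $A(\theta_1)=H_{\dot g}(\theta_1)-E[(g(X,\theta_0)-g(X,\theta_1)+\delta_n\ell(X))\ddot g(X,\theta_1)]$ rather than of $H_{\dot g}$ alone; this is implicitly covered by the standard M-estimator regularity and our condition (f). A secondary nuisance is bounding the Taylor remainder uniformly over a shrinking neighborhood of $\theta^\ast$, which is handled using condition (d) through another ULLN and then multiplied by the $O_p(1)$ norm $\sqrt n(\hat\theta_n-\theta^\ast)$; all remaining pieces are routine manipulations.
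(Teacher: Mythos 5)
Your proposal is correct and follows essentially the same route as the paper: consistency of $\hat\theta_n$ toward the pseudo-true value via a law of large numbers for the least-squares criterion, followed by a Taylor expansion of the first-order condition around $\theta^\ast$, with the Hessian converging to $2H_{\dot g}$ under the null and local alternatives and the population first-order condition at $\theta_1$ yielding a centered $O_p(n^{-1/2})$ score under the fixed alternative. Your explicit remark that the sandwich matrix under the global alternative involves the extra $\ddot g$ term matches the paper's own expression for $E(\ddot Q(\theta^\ast))$ in that case.
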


\begin{proof}[Proofs of Lemma \ref{asymptotic property of theta estimation}.]
The least squares estimate of $\theta_0$ is the minimizer of   the following function over all $\theta \in \Theta$ as
\begin{eqnarray*}
  Q(\theta) &=& \frac{1}{n}\sum_{i=1}^{n}(y_i-g(x_i,\theta))^2, \\
  \hat\theta_n &=& \arg\min\limits_{\theta\in\Theta} Q(\theta) .
\end{eqnarray*}
The first order derivative of $Q$ with respect to $\theta$ is
\begin{eqnarray*}
  \dot Q(\theta) &=& -\frac{2}{n}\sum_{i=1}^{n}(y_i-g(x_i,\theta))\dot g(x_i,\theta)
  % &=& -\frac{2}{n}\sum_{i=1}^{n} e_i\dot g(x_i,\theta)
\end{eqnarray*}
where $\dot g = \partial g/\partial\theta$.
The second order derivative of $Q$ with respect to $\theta$ is
\begin{eqnarray*}
  \ddot Q(\theta) &=& \frac{2}{n}\sum_{i=1}^{n}\dot g(x_i,\theta)g(x_i,\theta)^{\top}-\frac{2}{n}\sum_{i=1}^{n}(y_i-g(x_i,\theta))\ddot g(x_i,\theta) \\
  %&=& \frac{2}{n}\sum_{i=1}^{n}\dot g(x_i,\theta)g(x_i,\theta)^{\top}-\frac{2}{n}\sum_{i=1}^{n} e_i\ddot g(x_i,\theta).
\end{eqnarray*}
The least squares estimate $\hat\theta_n$ satisfies $\dot Q(\hat\theta_n)=0$.
Notice that $\frac{1}{n}\sum_{i=1}^{n}(y_i-g(x_i,\theta))^2 \overset{a.s.}{\to} E(Y-g(X,\theta))^2$ for all $\theta$, the estimator $\hat\theta_n\overset{a.s.}{\to}\theta^\ast$.
Applying the Taylor expansion to $\dot Q(\hat\theta_n)$ around $\theta^\ast$, we have
\begin{eqnarray*}
  \dot Q(\hat\theta_n)-\dot Q(\theta^\ast) &=& \ddot Q(\tilde\theta)(\hat\theta_n-\theta^\ast) \\
  \hat\theta_n-\theta^\ast &=& \ddot Q(\tilde\theta)^{-1}(\dot Q(\hat\theta_n)-\dot Q(\theta^\ast)) \\
  &=& -\ddot Q(\tilde\theta)^{-1}\dot Q(\theta^\ast)
\end{eqnarray*}
where $\tilde\theta$ is a mid-value between $\hat\theta_n$ and $\theta^\ast$.
Since $\tilde\theta$ is close to $\theta^\ast$, it is easy to  show that
\begin{eqnarray*}
  \sqrt n (\hat\theta_n-\theta^\ast) &=& - E(\ddot Q(\theta^\ast))^{-1} \sqrt n \dot Q(\theta^\ast)+ o_p(1) \\
  &=& 2 E(\ddot Q(\theta^\ast))^{-1}\frac{1}{\sqrt n}\sum_i (y_i-g(x_i,\theta^\ast))\dot g(x_i,\theta^\ast) + o_p(1).
\end{eqnarray*}
%Notice that $E(\ddot Q(\theta_0)) = 2 E(\dot g \dot g^{\top}) - 2 E(e\ddot g) $ where $E(e\ddot g)=0$ under the null hypothesis and $E(e\ddot g)\to 0$ under the local alternative.
Under the null hypothesis and the local alternatives with $\delta_n\to 0$,
\begin{eqnarray*}
    \inf_{\theta\in\Theta} E(Y-g(X,\theta))^2 &=& \inf_{\theta\in\Theta} E(g(X,\theta_0)-g(X,\theta))^2+E(\varepsilon)^2 .
\end{eqnarray*}
thus $\theta^\ast = \theta_0$.
Specifically, under the null hypothesis, $y_i-g(x_i,\theta^\ast) = \varepsilon_i$ and
\begin{eqnarray*}
  E(\ddot Q(\theta^\ast)) &=& 2 E(\dot g(X,\theta^\ast)\dot g(X,\theta^\ast)^\top) + 2 E(\varepsilon\ddot g)\\
  &=& 2 H_{\dot g}
\end{eqnarray*}
thus
\begin{eqnarray*}
% \nonumber % Remove numbering (before each equation)
  \sqrt n(\hat\theta_n-\theta^\ast) &=& H_{\dot g}^{-1}\frac{1}{\sqrt n}\sum_{i=1}^n \varepsilon_i \dot g(x_i,\theta^\ast)+o_p(1).
\end{eqnarray*}

Under the local alternative hypothesis, $y_i-g(x_i,\theta^\ast) = \varepsilon_i + \delta_n\ell(x_i)$ and
\begin{eqnarray*}
  E(\ddot Q(\theta^\ast)) &=& 2 E(\dot g(X,\theta^\ast)\dot g(X,\theta^\ast)^\top) + 2 E(\varepsilon\ddot g)+ 2 \delta_n E(\ell\dot g) \\
  &=& 2 H_{\dot g}+o_p(1).
\end{eqnarray*}
Hence
\begin{eqnarray*}
% \nonumber % Remove numbering (before each equation)
  \sqrt n(\hat\theta_n-\theta^\ast) &=& H_{\dot g}^{-1}\frac{1}{\sqrt n}\sum_{i=1}^n \varepsilon_i \dot g(x_i,\theta^\ast)
  + \sqrt n \delta_n\cdot H_{\dot g}^{-1}\frac{1}{n}\sum_{i=1}^n \ell(x_i) \dot g(x_i,\theta^\ast)
  + o_p(1) \\
  &=& H_{\dot g}^{-1}\frac{1}{\sqrt n}\sum_{i=1}^n \varepsilon_i \dot g(x_i,\theta^\ast)
  + \sqrt n \delta_n\cdot H_{\dot g}^{-1}E(\ell\dot g)
  + o_p(1) .
\end{eqnarray*}
Under the global alternative with $\delta_n$ fixed,
\begin{eqnarray*}
    \inf_{\theta\in\Theta} E(Y-g(X,\theta))^2 &=& \inf_{\theta\in\Theta} E(g(X,\theta_0)-g(X,\theta)+\delta_n\ell(X))^2+E(\varepsilon)^2 .
\end{eqnarray*}
The minimizer
\begin{eqnarray*}
% \nonumber % Remove numbering (before each equation)
  \theta^\ast = \theta_1 =\arg \inf_{\theta\in\Theta} E(Y-g(X,\theta))^2
\end{eqnarray*}
is a value that is more likely to be different from $\theta_0$ under the null hypothesis.
In this case,
\begin{eqnarray*}
% \nonumber % Remove numbering (before each equation)
  E(\ddot Q(\theta^\ast)) &=& 2 H_{\dot g}-2 E\{[g(X,\theta_0)-g(X,\theta_1)+\delta_n\ell(X)]\ddot g(X,\theta_1)\} \\
  \sqrt n (\hat\theta_n-\theta^\ast)
  &=& 2 E(\ddot Q(\theta^\ast))^{-1}\frac{1}{\sqrt n}\sum_i (y_i-g(x_i,\theta^\ast))\dot g(x_i,\theta^\ast) + o_p(1).
\end{eqnarray*}
Notice that at the population level,
\begin{eqnarray*}
% \nonumber % Remove numbering (before each equation)
  0 = \frac{\partial E(Y-g(X,\theta))^2}{\partial \theta}|_{\theta=\theta^\ast}
  = -2 E[(Y-g(X,\theta^\ast))\dot g(X,\theta^\ast)] .
\end{eqnarray*}
Therefore under the global alternative hypothesis,
\begin{eqnarray*}
  \sqrt n (\hat\theta_n-\theta^\ast)
  &\overset{d}{\to}& N(0,\Sigma_1)
\end{eqnarray*}
where $\Sigma_1 = 4 E(\ddot Q(\theta^\ast))^{-1} E((Y-g(X,\theta^\ast))^2 \dot g(X,\theta^\ast)\dot g(X,\theta^\ast)^\top) E(\ddot Q(\theta^\ast))^{-1}  $.
\end{proof}

\subsection*{Proofs of the theorems}
Define $w_{ij} = \frac{1}{\sqrt{\lambda_{ij}+1}} $ which is symmetric about $x_i$ and $x_j$. The integrated statistic $V_n$ can be decomposed as:
\begin{eqnarray}\label{V_n}
  V_n &=& \frac{1}{n(n-1)}\sum_{i=1}^{n}\sum_{j\neq i} \hat e_i \hat e_j w_{ij} \nonumber\\
  &=& \frac{1}{n(n-1)}\sum_{i=1}^{n}\sum_{j\neq i} e_i e_j w_{ij} - \frac{2}{n(n-1)}\sum_{i=1}^{n}\sum_{i\neq j} e_i (e_j-\hat e_j) w_{ij} \nonumber \\
  && + \frac{1}{n(n-1)}\sum_{i=1}^{n}\sum_{j\neq i} (e_i-\hat e_i)(e_j- \hat e_j) w_{ij} \nonumber\\
  &=:& V_1-2V_2+V_3.
\end{eqnarray}

\begin{proof}[Proofs of Theorem \ref{theorem: null}. ]
% {\it Proof of Theorem \ref{theorem: null}. }
Under the null hypothesis, $e = \varepsilon $,
\begin{eqnarray*}
  V_1 &=& \frac{1}{n(n-1)}\sum_{i=1}^{n}\sum_{j\neq i} \varepsilon_i \varepsilon_j w_{ij} , \\
   % = O_p(\frac{1}{n}) ,\\
  V_2 &=& \frac{1}{n(n-1)}\sum_{i=1}^{n}\sum_{j\neq i} \varepsilon_i (\varepsilon_j-\hat \varepsilon_j)w_{ij} \\
  &=& \frac{1}{n(n-1)}\sum_{i=1}^{n}\sum_{j\neq i} \varepsilon_i \dot g_j^{\top} (\hat\theta_n-\theta_0) w_{ij} + O_p(n^{-3/2}) \\
  &=& \frac{1}{n}\sum_{i=1}^n \varepsilon_i (\frac{1}{n-1}\sum_{j\neq i} \dot g_j w_{ij})^{\top}(\hat\theta_n-\theta_0) + O_p(n^{-3/2})  \\
  &=& \frac{1}{n}\sum_{i=1}^n \varepsilon_i (\hat\theta_n-\theta_0)^{\top} E_{1i}  + O_p(n^{-3/2})  \\
  &=& \frac{1}{n^2}\sum_{i=1}^n \sum_{j=1}^n \varepsilon_i \varepsilon_j  \dot g_j^{\top} H_{\dot g}^{-1} E_{1i} + O_p(n^{-3/2}) \\
  &=& \frac{n-1}{n}\cdot \frac{1}{n(n-1)} \sum_{i=1}^n \sum_{j\neq i}^n \varepsilon_i \varepsilon_j  \dot g_j^{\top} H_{\dot g}^{-1} E_{1i}
  + \frac{1}{n^2} \sum_{i=1}^n \varepsilon_i^2 \dot g_i^{\top} H_{\dot g}^{-1} E_{1i} + O_p(n^{-3/2}) .
  \end{eqnarray*}
Define
\begin{eqnarray*}
  V_2^0 &=& \frac{1}{n(n-1)}\sum_{i=1}^n \sum_{j\neq i} \varepsilon_i \varepsilon_j  \dot g_j^{\top} H_{\dot g}^{-1} E_{1i} =  O_p(\frac{1}{n}), ~~~E_{1i} = E_j(\dot g_j w_{ij}) , \\
  \mu_{v_2}
  &=& E(\varepsilon_i^2 \dot g_i^{\top} H_{\dot g}^{-1} E_{1i} )
  = E(\sigma_i^2 \dot g_i^{\top} H_{\dot g}^{-1} E_{1i} ), ~~~\sigma_i^2 = E(\varepsilon^2|x_i).
\end{eqnarray*}
Then
%\begin{eqnarray*}
%  V_2 = \frac{n-1}{n} V_2^0
%  + \frac{1}{n}\mu_{v_2}+ O_p(n^{-3/2}),
%\end{eqnarray*}
%Hence for the second term of $ V_n$, we have
\begin{eqnarray}\label{nV_2}
  n V_2 = n V_2^0 + \mu_{v_2} + O_p(n^{-1/2})
\end{eqnarray}
For $V_3$, we have a similar decomposition as,
\begin{eqnarray*}
  V_3
  &=& \frac{1}{n(n-1)}\sum_{i=1}^{n}\sum_{j\neq i} (\varepsilon_i-\hat \varepsilon_i)(\varepsilon_j- \hat \varepsilon_j) w_{ij} \\
  &=& \frac{1}{n(n-1)}\sum_{i=1}^{n}\sum_{j\neq i} (\hat\theta_n-\theta_0)^{\top} \dot g_i \dot g_j ^{\top} (\hat\theta_n-\theta_0) w_{ij} +O_p(n^{-3/2}) \\
  &=& (\hat\theta_n-\theta_0)^{\top} \cdot \frac{1}{n(n-1)}\sum_{i=1}^{n}\sum_{j\neq i} \dot g_i \dot g_j ^{\top} w_{ij}\cdot  (\hat\theta_n-\theta_0) +O_p(n^{-3/2}) \\
  &=& \frac{1}{n^2}\sum_{i=1}^n \sum_{j=1}^n \varepsilon_i \varepsilon_j \dot g_i^{\top} H_{\dot g}^{-1}  E(\dot g_k E_{1k}^{\top}) H_{\dot g}^{-1} \dot g_j
  +O_p(n^{-3/2})  \\
  &=& \frac{1}{n^2}\sum_{i=1}^n \sum_{j\neq i}^n \varepsilon_i \varepsilon_j \dot g_i^{\top} H_{\dot g}^{-1}  E(\dot g_k E_{1k}^{\top}) H_{\dot g}^{-1} \dot g_j
  + \frac{1}{n^2}\sum_{i=1}^n \varepsilon_i^2 \dot g_i^{\top} H_{\dot g}^{-1}  E(\dot g_k E_{1k}^{\top}) H_{\dot g}^{-1} \dot g_i +O_p(n^{-3/2}) .
  \end{eqnarray*}
Define
\begin{eqnarray*}
  V_3^0 &=& \frac{1}{n(n-1)}\sum_{i=1}^n \sum_{j\neq i} \varepsilon_i \varepsilon_j \dot g_i^{\top} H_{\dot g}^{-1}  E(\dot g_k E_{1k}^{\top}) H_{\dot g}^{-1} \dot g_j =
  O_p(\frac{1}{n}) , \\
  \mu_{v_3} &=& E(\varepsilon^2_i\dot g_i^{\top} H_{\dot g}^{-1} E(\dot g_k E_{1k}^{\top}) H_{\dot g}^{-1} \dot g_i )
  = E(\sigma^2_i\dot g_k^{\top} H_{\dot g}^{-1}\dot g_i\dot g_i^{\top} H_{\dot g}^{-1} E_{1k} ) .
\end{eqnarray*}
Then
%\begin{eqnarray*}
%  V_3 = \frac{n-1}{n} V_3^0 + \frac{1}{n} \mu_{v_3} + O_p(n^{-3/2}),
%\end{eqnarray*}
%and
\begin{eqnarray}\label{nV_3}
  n V_3 = n V_3^0 +\mu_{v_3} +O_p(n^{-1/2}) .
\end{eqnarray}

If $\sigma_\varepsilon^2 \equiv E(\varepsilon^2|X)$, then
\begin{eqnarray*}
    \mu_{v_3} = \sigma_\varepsilon^2 E(\dot g_i^{\top} H_{\dot g}^{-1} E_{1i})= \mu_{v_2} .
\end{eqnarray*}
Then from \eqref{V_n} to \eqref{nV_3} we can see $ n V_n $ has the same asymptotic behavior as $ n V_n^0 +\mu_0 $, i.e.
\begin{eqnarray}\label{nV_n in null}
  n V_n = n V_n^0 + \mu_0 + o_p(1)
\end{eqnarray}
where $V_n^0 = V_1-2 V_2^0 + V_3^0 $ and $\mu_0 = \mu_{v_3}-2\mu_{v_2}$.

Denote
\begin{eqnarray*}
  \tilde w_{ij} = w_{ij}
- 2 \dot g_j^{\top} H_{\dot g}^{-1} E_{1i}
+ \dot g_i^{\top} H_{\dot g}^{-1}  E(\dot g_k E_{1k}^{\top}) H_{\dot g}^{-1} \dot g_j,
\end{eqnarray*}
then
\begin{eqnarray*}
  V_n^0 = \frac{1}{n(n-1)}\sum_{i=1}^n \sum_{j\neq i} \varepsilon_i \varepsilon_j \tilde w_{ij} .
\end{eqnarray*}

$V_n^0$ can be represented by a U-statistic.
Let
\begin{eqnarray*}
  h(x_i,x_j) = \frac{1}{2}\left(\tilde w_{ij}+\tilde w_{ji}\right),
\end{eqnarray*}
then $V_n^0$ has the same limiting distribution as
\begin{eqnarray}\label{U_n}
  U_n = \frac{1}{n(n-1)}\sum_{i=1}^n \sum_{j\neq i} \varepsilon_i \varepsilon_j h(x_i,x_j).
\end{eqnarray}
This is a degenerate U-statistic and details of its asymptotic distribution can be found in chapter 5, Serfling (1980). % \cite{serfling1980approximation}
Here we simply show the results.
Let $\lambda_1,\lambda_2,\dots$ be the corresponding eigenvalues to the distinct solutions $s_1,s_2,\dots$ satisfying that
\begin{eqnarray*}
  \int_{-\infty}^\infty h(x_i,x_j)s(x_j)dF(x_j) = \lambda s(x_i)
\end{eqnarray*}
where $F(\cdot)$ is the cumulative distribution function of $X$.
Based on \eqref{nV_n in null} and \eqref{U_n}, the limiting null distribution of our test statistic is
  \begin{eqnarray*}
    T_n \overset{\Delta}{=} n V_n \overset{d}{\to} \sum_{i=1}^\infty \lambda_i(Z_i^2-1) + \mu_0
  \end{eqnarray*}
  where $Z_1,Z_2,\dots$ are independent standard normal random variables.
\end{proof}

\begin{proof}[Proofs of Theorem \ref{theorem: alternative}. ]
Under the alternative hypothesis,
\begin{eqnarray*}
  Y = g(X,\theta)+\delta_n \ell(X)+\varepsilon .
\end{eqnarray*}
therefore $e_i = \delta_n\ell(x_i)+\varepsilon_i $.

Firstly, consider the global alternative hypothesis where $\delta_n$ is some constant.
$V_n$ can be decomposed as
\begin{eqnarray}\label{V_n in alter}
  V_n &=& \frac{1}{n(n-1)}\sum_{i=1}^{n}\sum_{j\neq i} \hat e_i \hat e_j w_{ij} \nonumber \\
  &=& \frac{1}{n(n-1)}\sum_{i=1}^{n}\sum_{j\neq i} e_i e_j w_{ij} - \frac{2}{n(n-1)}\sum_{i=1}^{n}\sum_{i\neq j} e_i (e_j-\hat e_j) w_{ij} \nonumber \\
  && + \frac{1}{n(n-1)}\sum_{i=1}^{n}\sum_{j\neq i} (e_i-\hat e_i)(e_j- \hat e_j) w_{ij} \nonumber \\
  &=& V_1-2V_2+V_3 .
\end{eqnarray}
For the second term,
\begin{eqnarray}\label{V_2 in alter}
  V_2 &=& \frac{1}{n(n-1)}\sum_{i=1}^{n}\sum_{j\neq i} e_i
  (e_j-\hat e_j)
  w_{ij} \nonumber \\
  &=& \frac{1}{n(n-1)}\sum_{i=1}^{n}\sum_{j\neq i} e_i
  \dot g_j^\top (\hat\theta_n-\theta_0)
  w_{ij} + o_p(V_2^\ast) ,  \\
  V_2^\ast &=& \frac{1}{n(n-1)}\sum_{i=1}^{n}\sum_{j\neq i} e_i
  \dot g_j^\top (\hat\theta_n-\theta_0)
  w_{ij}\nonumber \\
  &=& \frac{1}{n(n-1)}\sum_{i=1}^{n}\sum_{j\neq i} e_i
  \dot g_j^\top
  w_{ij}
  \cdot o_p(1) \nonumber\\
  &=& E(\ell_i
  \dot g_j^\top
  w_{ij}) \cdot o_p(1) \nonumber\\
  &=& o_p(1) .
\end{eqnarray}
For the third term,
\begin{eqnarray}
  V_3 &=& \frac{1}{n(n-1)}\sum_{i=1}^{n}\sum_{j\neq i} (e_i-\hat e_i)(e_j- \hat e_j) w_{ij} \nonumber\\
  &=& (\hat\theta_n-\theta_0)^{\top} \cdot \frac{1}{n(n-1)}\sum_{i=1}^{n}\sum_{j\neq i} \dot g_i \dot g_j ^{\top} w_{ij}\cdot  (\hat\theta_n-\theta_0) + o_p(V_3^\ast) ,\label{V_3 in alter}\\
  V_3^\ast &=& (\hat\theta_n-\theta_0)^{\top} \cdot \frac{1}{n(n-1)}\sum_{i=1}^{n}\sum_{j\neq i} \dot g_i \dot g_j ^{\top} w_{ij}\cdot  (\hat\theta_n-\theta_0) \nonumber\\
  &=& o_p(1) \cdot E(\dot g_i \dot g_j^\top w_{ij})\cdot o_p(1) + o_p(1)\nonumber \\
  &=& o_p(1) \label{V_3* in alter}.
\end{eqnarray}
Hence from \eqref{V_n in alter} to \eqref{V_3* in alter}, we have $V_n = V_1+o_p(1)$ . When $\delta_n$ is fixed,
\begin{eqnarray*}
  V_1 &\overset{p}{\to}& \mu_1 = E(e_i e_j w_{ij}) = E(\delta_n^2 \ell(x_i)\ell(x_j)w(x_i,x_j)) .
\end{eqnarray*}
Therefore under the global alternative hypothesis,
  \begin{eqnarray*}
  T_n/n\overset{p}{\to}\mu_1 .
\end{eqnarray*}

Next, we consider the local alternative where $\delta_n\to 0$.
Similar with the proof under the null distribution, we have
\begin{eqnarray*}
  V_n &=& \frac{1}{n(n-1)}\sum_{i=1}^n\sum_{j\neq i}\hat e_i\hat e_j w_{ij} \\
  &=& \frac{1}{n(n-1)}\sum_{i=1}^n\sum_{j\neq i} e_i e_j \tilde w_{ij}
  + \frac{1}{n} \mu_0 + o_p(\frac{1}{n}) \\
  &=& \frac{1}{n(n-1)}\sum_{i=1}^n\sum_{j\neq i} \varepsilon_i \varepsilon_j \tilde w_{ij}
  + \delta_n \cdot \frac{1}{n(n-1)}\sum_{i=1}^n\sum_{j\neq i} [\ell_i \varepsilon_j+ \varepsilon_i \ell_j ] \tilde w_{ij} \\
  && + \delta_n^2\cdot \frac{1}{n(n-1)}\sum_{i=1}^n\sum_{j\neq i}  \ell_i\ell_j \tilde w_{ij}
  + \frac{1}{n} \mu_0+ o_p(\frac{1}{n}) \\
  &=& \frac{1}{n(n-1)}\sum_{i=1}^n\sum_{j\neq i} \varepsilon_i \varepsilon_j \tilde w_{ij}
  + \delta_n \cdot \frac{1}{n(n-1)}\sum_{i=1}^n\sum_{j\neq i} \varepsilon_i \ell_j (\tilde w_{ij}+\tilde w_{ji}) \\
  && + \delta_n^2\cdot \frac{1}{n(n-1)}\sum_{i=1}^n\sum_{j\neq i}  \ell_i\ell_j \tilde w_{ij} + \frac{1}{n} \mu_0  + o_p(\frac{1}{n})\\
  &=& \frac{1}{n(n-1)}\sum_{i=1}^n\sum_{j\neq i} \varepsilon_i
  [\varepsilon_j \tilde w_{ij} + \delta_n \ell_j (\tilde w_{ij}+\tilde w_{ji})] \\
  && + \delta_n^2\cdot \frac{1}{n(n-1)}\sum_{i=1}^n\sum_{j\neq i}  \ell_i\ell_j \tilde w_{ij} + \frac{1}{n} \mu_0+ o_p(\frac{1}{n}) .
\end{eqnarray*}

Define $M_i = E\left(\ell_j(\tilde w_{ij}+\tilde w_{ji})|x_i\right) $ and $\Delta_\mu = E(\ell_i\ell_j \tilde w_{ij})$.
We have
\begin{eqnarray*}
  && E[\varepsilon_j \tilde w_{ij} + \delta_n \ell_j (\tilde w_{ij}+\tilde w_{ji})|x_i]= \delta_n E[\ell_j(\tilde w_{ij}+\tilde w_{ji})|x_i]
  = \delta_n M_i
\end{eqnarray*}
and our test statistic
\begin{eqnarray}\label{T_n in local}
  T_n &=& n V_n
  = \sqrt{n} \delta_n\cdot \frac{1}{\sqrt n} \sum_{i=1}^n \varepsilon_i M_i + n \delta_n^2\cdot \Delta_\mu + \mu_0 +o_p(1) .
\end{eqnarray}
From the expression in \eqref{T_n in local}, the asymptotic behavior of $T_n$ can obtained.
when $\delta_n = n^{-1/2}$, $T_n \overset{d}{\to} N(\mu_{1n},\Sigma) $ where $\mu_{1n} = \Delta_\mu+\mu_0 $ and $\Sigma = Var(\varepsilon_i M_i)$.
When $\delta_n\to 0$ and $\sqrt n \delta_n\to \infty $ , $T_n/(n \delta_n^2)\overset{p}{\to}\Delta_\mu $.
\end{proof}

\bibliography{ref_local_global_p}
\bibliographystyle{dcu} % last name, first name. (year)

\newpage
% Table generated by Excel2LaTeX from sheet 'a=b=ones(p,1)'
\begin{table}[htbp]
  \centering
  \caption{Empirical sizes and powers of $T_n$, $T^{GWZ}$, $T^S$ and $T^{Zh}$ for {\it Scenario 1} with $X\sim N(0,I_p)$, $\varepsilon\sim N(0,1) $ and $n=200$. }
    \begin{tabular}{r|cccc}
    \toprule
    \toprule
    p=2   & $T_n$ & $T^{GWZ}$  & $T^S$  & $T^{Zh}$ \\
    \midrule
    a = 0.0   & 0.0570 & 0.054 & 0.066 & 0.043 \\
    0.2   & 0.3790 & 0.403 & 0.372 & 0.211 \\
    %0.4   & 0.9340 & 0.945 & 0.912 & 0.765 \\
    0.6   & 0.9950 & 1.000 & 0.998 & 0.991 \\
    %0.8   & 0.9990 & 1.000 & 1.000 & 1.000 \\
    1.0   & 0.9990 & 1.000 & 1.000 & 1.000 \\
    \midrule
    \midrule
    p=4    & $T_n$ & $T^{GWZ}$  & $T^S$  & $T^{Zh}$ \\
    \midrule
    a = 0.0   & 0.0480 & 0.050 & 0.044 & 0.042 \\
    0.2   & 0.3510 & 0.384 & 0.200 & 0.100 \\
    %0.4   & 0.9040 & 0.947 & 0.620 & 0.311 \\
    0.6   & 0.9620 & 0.999 & 0.936 & 0.684 \\
    %0.8   & 0.9740 & 1.000 & 0.990 & 0.938 \\
    1.0   & 0.9860 & 1.000 & 1.000 & 0.997 \\
    \midrule
    \midrule
    p=8    & $T_n$ & $T^{GWZ}$  & $T^S$  & $T^{Zh}$ \\
    \midrule
    a = 0.0   & 0.0610 & 0.050 & 0.028 & 0.036 \\
    0.2   & 0.3400 & 0.372 & 0.052 & 0.047 \\
    %0.4   & 0.8380 & 0.941 & 0.072 & 0.172 \\
    0.6   & 0.9290 & 1.000 & 0.104 & 0.371 \\
    %0.8   & 0.9690 & 1.000 & 0.148 & 0.631 \\
    1.0   & 0.9650 & 1.000 & 0.202 & 0.847 \\
    \bottomrule
    \bottomrule
    \end{tabular}%
  \label{tab: Scenario 1}%
\end{table}%

% Figure
\begin{figure}
  \centering
  \includegraphics[width = 12cm]{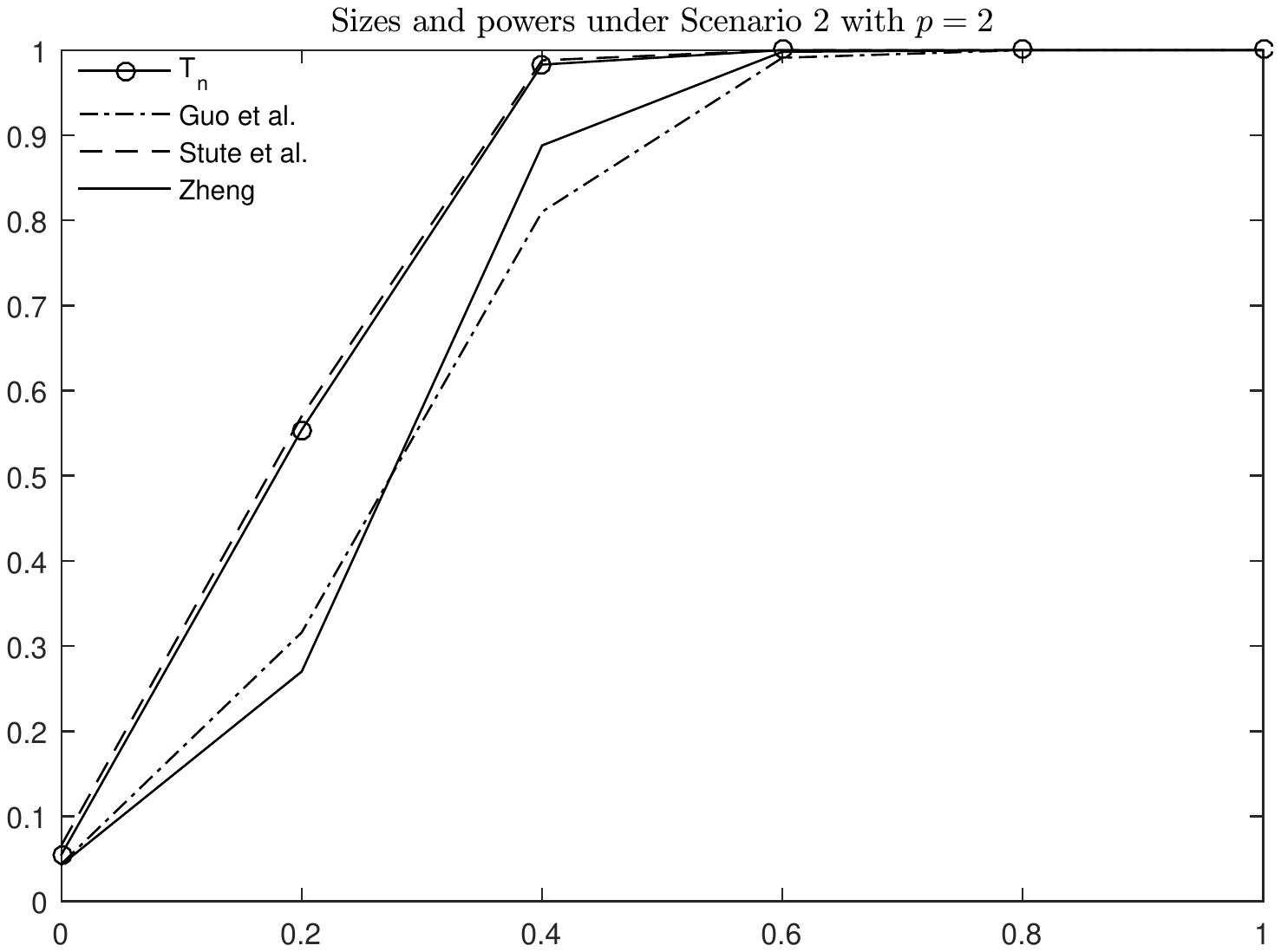}
  \vskip 2pt
  \includegraphics[width = 12cm]{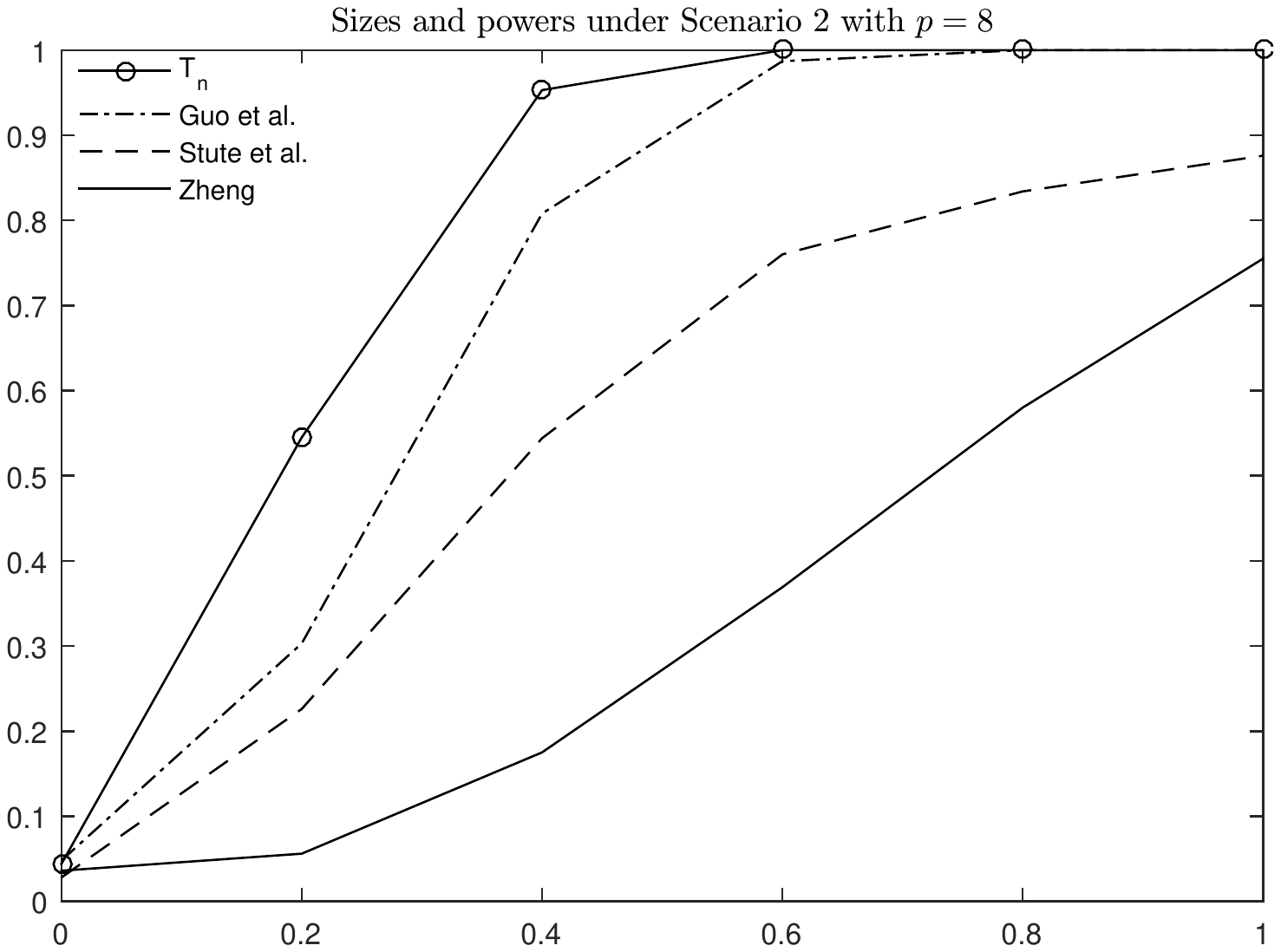}
  \caption{The empirical size and powers curves of $T_n$, $T^{GWZ}$, $T^S$ and $T^{Zh}$ in {\it Scenario 2} $n=200$ with $p=2$ and $p=8$. }
  \label{fig: scenario 2}
\end{figure}

% Table generated by Excel2LaTeX from sheet 'all1+cub'
\begin{table}[htbp]
  \centering
  \caption{Empirical sizes and powers of $T_n$, $T^{GWZ}$, $T^S$ and $T^{Zh}$ for {\it Scenario 3} with $X\sim N(0,\Sigma)$ and $n=200$. }
    \begin{tabular}{r|cccc}
    \toprule
    \toprule
    p=2   & $T_n$ & $T^{GWZ}$  & $T^S$  & $T^{Zh}$ \\
    \midrule
    a = 0.0   & 0.056 & 0.046 & 0.042 & 0.0465 \\
    0.2   & 0.339 & 0.274 & 0.32  & 0.1675 \\
    %0.4   & 0.869 & 0.816 & 0.84  & 0.6285 \\
    0.6   & 0.995 & 0.995 & 0.996 & 0.959 \\
    %0.8   & 1     & 0.998 & 1     & 1 \\
    1.0   & 1     & 1.000 & 1     & 1 \\
    \midrule
    \midrule
    p=4   & $T_n$ & $T^{GWZ}$  & $T^S$  & $T^{Zh}$ \\
    \midrule
    a = 0.0   & 0.061 & 0.0465 & 0.058 & 0.047 \\
    0.2   & 0.266 & 0.213 & 0.192 & 0.0805 \\
    %0.4   & 0.783 & 0.7245 & 0.598 & 0.2195 \\
    0.6   & 0.982 & 0.9795 & 0.92  & 0.5385 \\
    %0.8   & 1     & 0.999 & 0.996 & 0.8305 \\
    1.0   & 1     & 1     & 1     & 0.966 \\
    \midrule
    \midrule
    p=8   & $T_n$ & $T^{GWZ}$  & $T^S$  & $T^{Zh}$ \\
    \midrule
    a = 0.0   & 0.044 & 0.052 & 0.046 & 0.0365 \\
    0.2   & 0.243 & 0.211 & 0.1   & 0.054 \\
    %0.4   & 0.677 & 0.662 & 0.28  & 0.126 \\
    0.6   & 0.944 & 0.9595 & 0.482 & 0.2875 \\
    %0.8   & 0.997 & 0.9985 & 0.704 & 0.5265 \\
    1.0   & 1     & 1     & 0.85  & 0.743 \\
    \bottomrule
    \bottomrule
    \end{tabular}%
  \label{tab: Scenario 3 dependent}%
\end{table}%

\begin{figure}
 \centering
 \includegraphics[width = 12cm, height = 6cm]{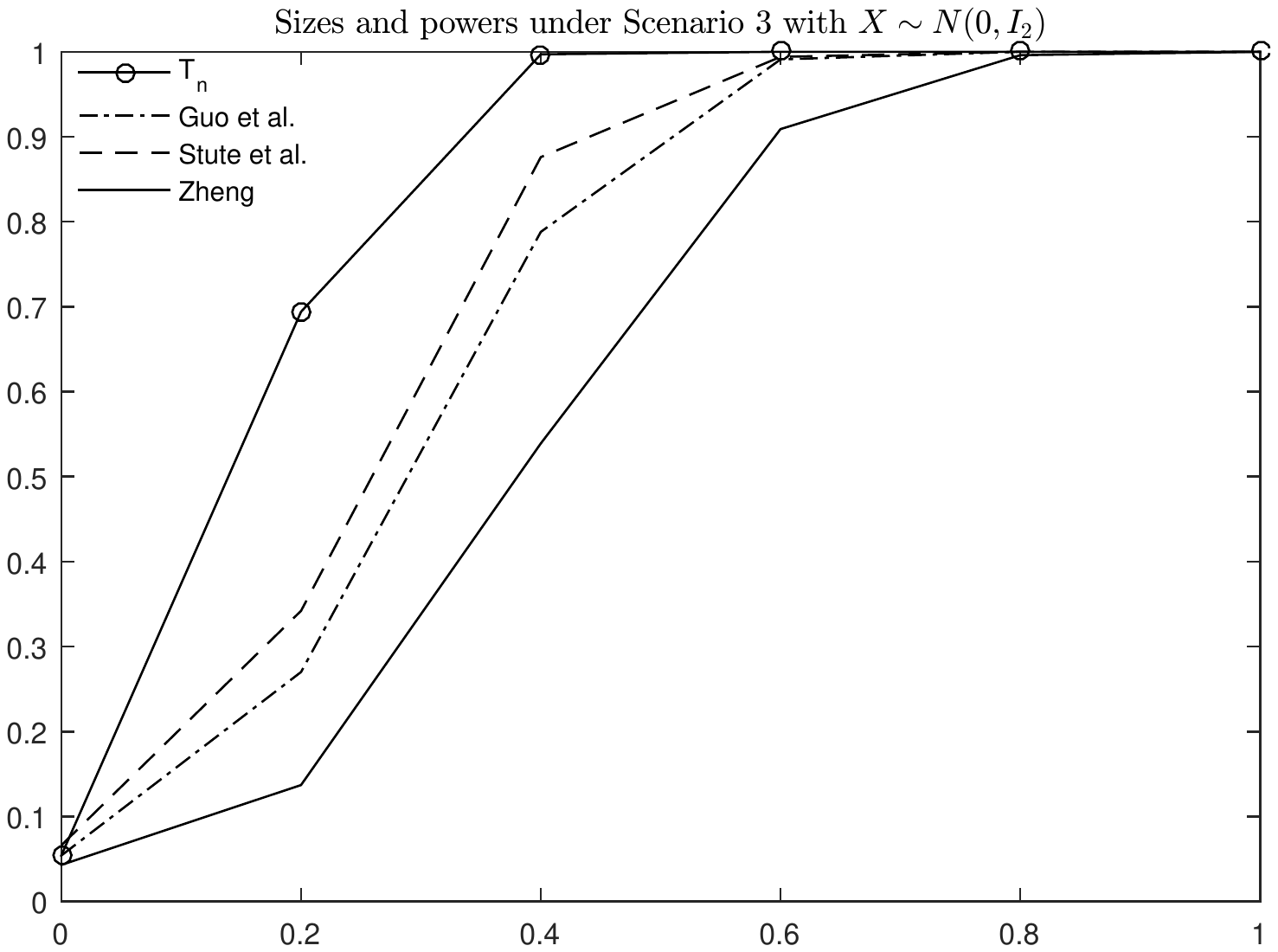}
 \includegraphics[width = 12cm, height = 6cm]{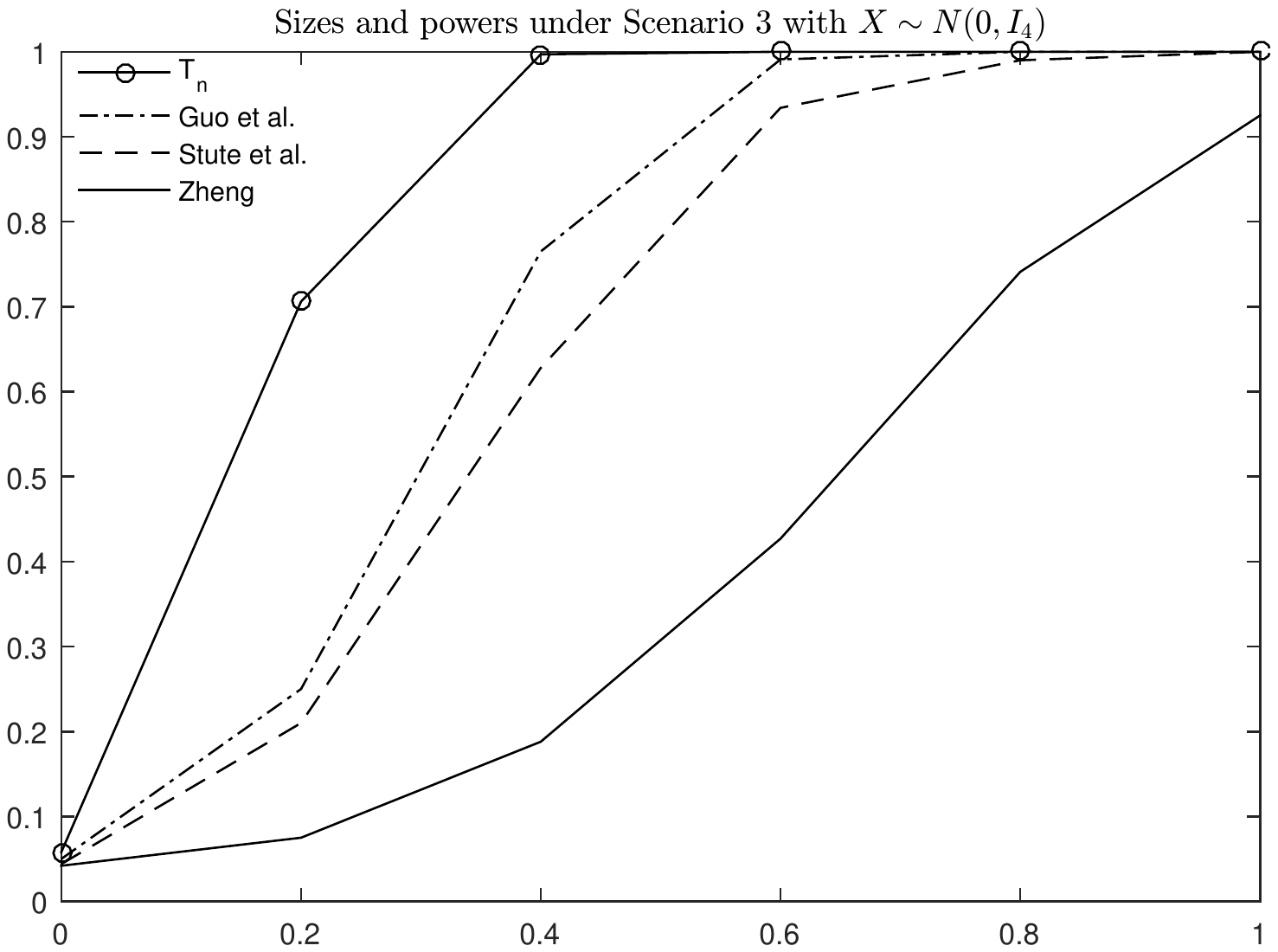}
 \includegraphics[width = 12cm, height = 6cm]{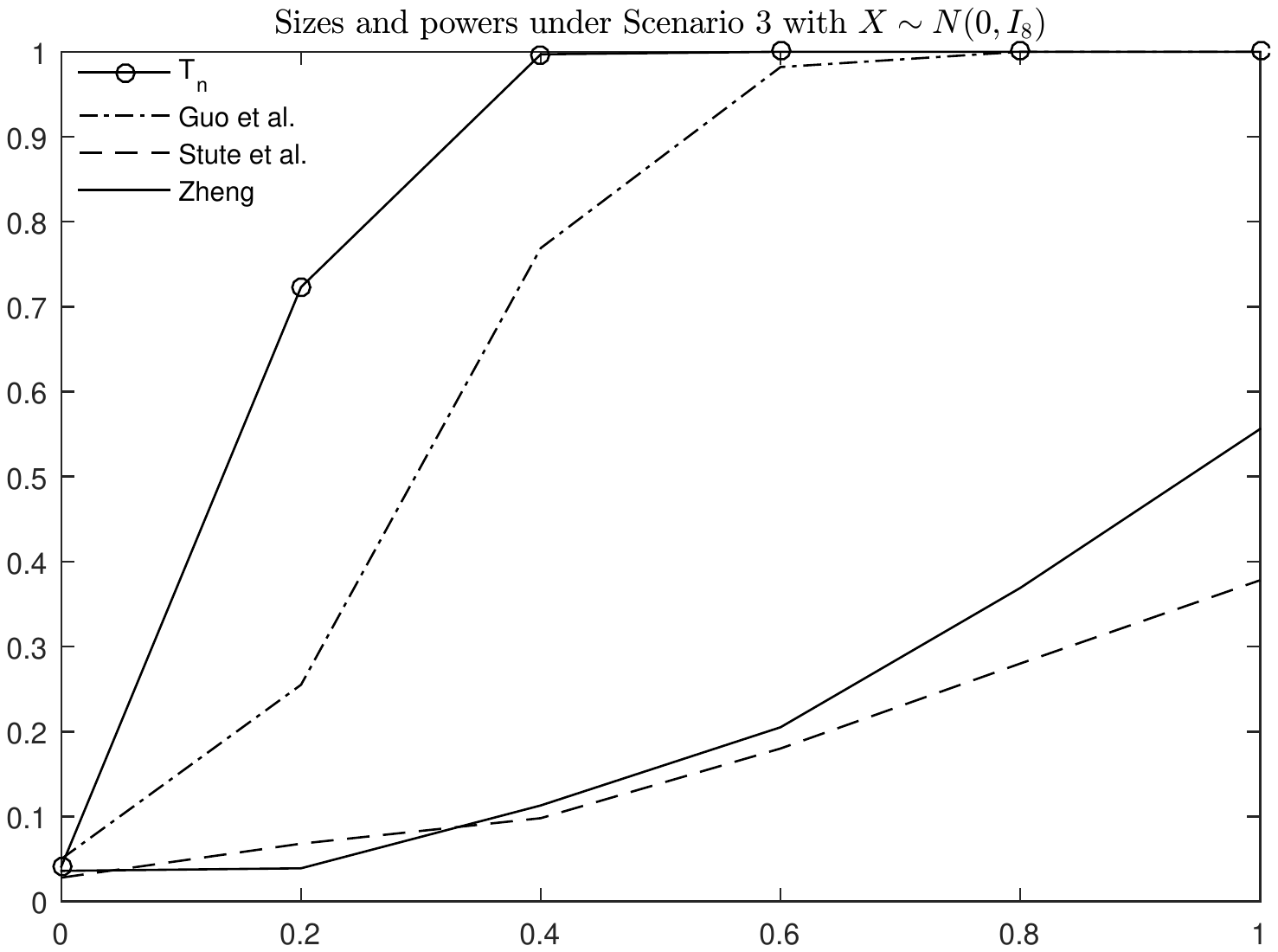}
 \caption{The empirical size and powers curves of $T_n$, $T^{GWZ}$, $T^S$ and $T^{Zh}$ in {\it Scenario 3} $n=200$ with $X\sim N(0,I_p)$. }
 \label{fig: scenario 3 independent}
\end{figure}

\begin{figure}
  \centering
  \includegraphics[width = 12cm]{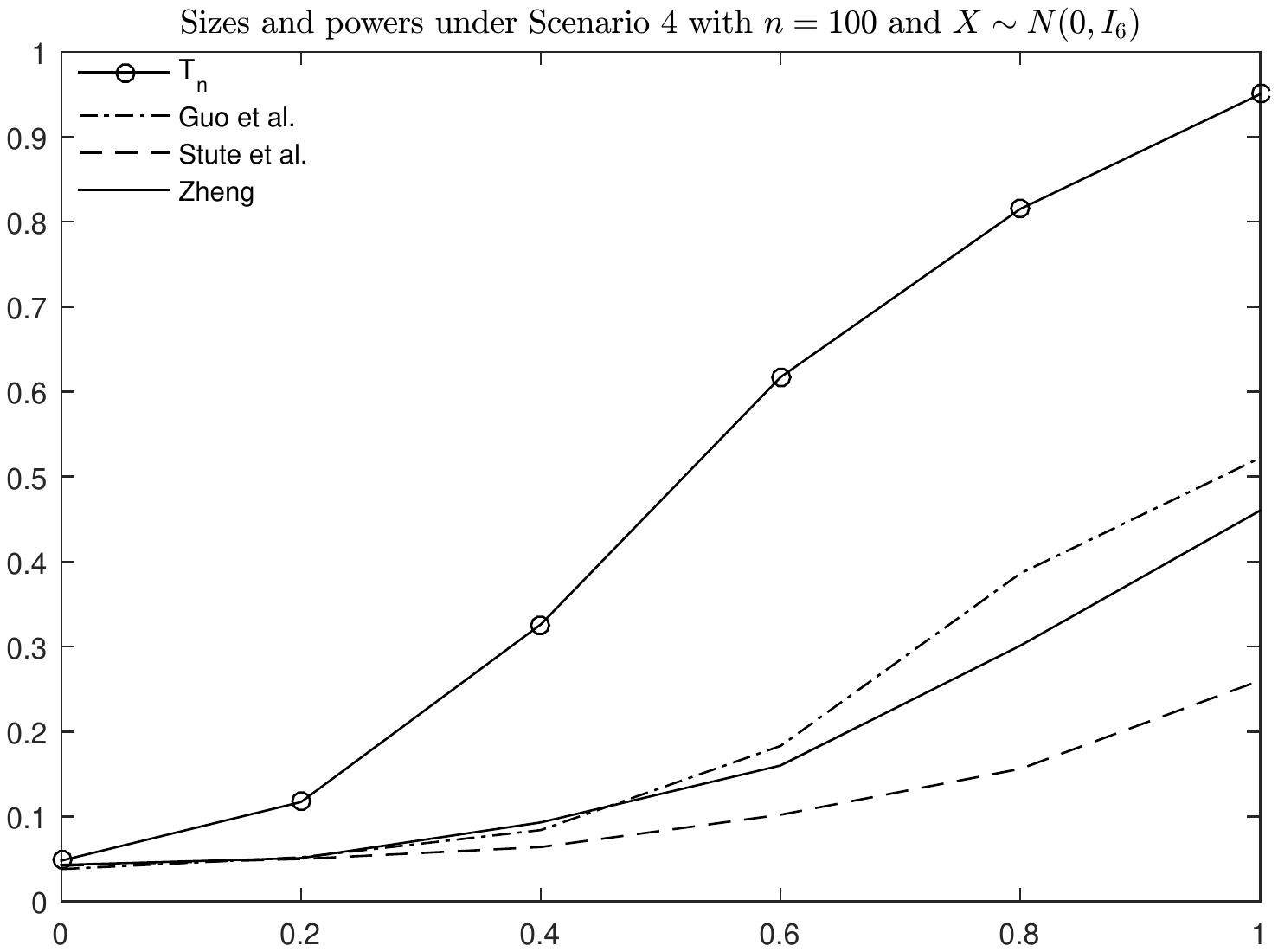}
  \includegraphics[width = 12cm]{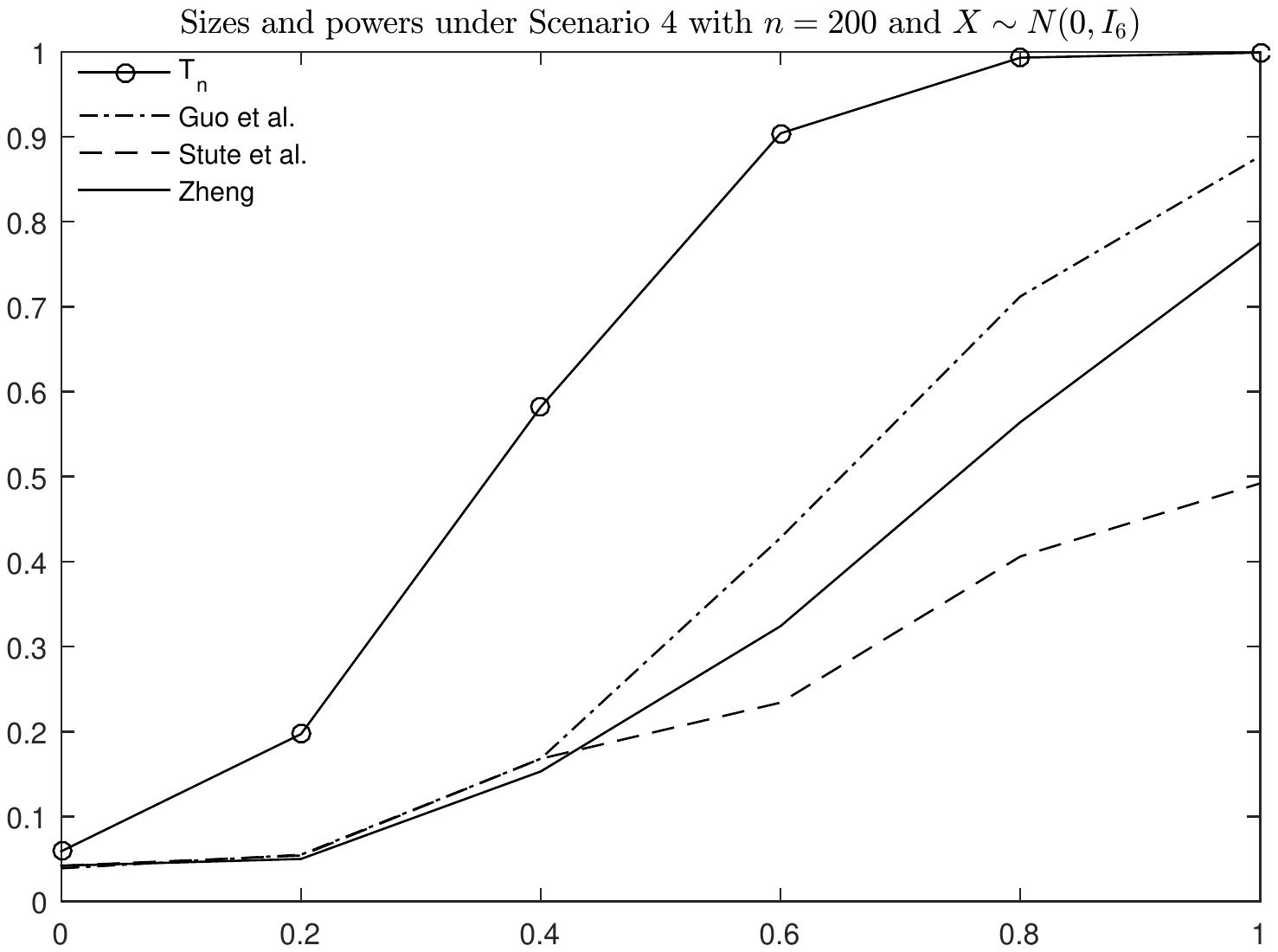}
  \caption{The empirical size and power curves of $T_n$, $T^{GWZ}$, $T^S$ and $T^{Zh}$ in {\it Scenario 4} with $n=100, 200$ and $X\sim N(0,I_6)$. }
  \label{fig: scenario 4}
\end{figure}

\end{document}